\newtheorem{definition}{Definition}
\newtheorem{theorem}{Theorem}
\newtheorem{proposition}{Proposition}
\newtheorem{lemma}{Lemma}
\newtheorem{problem}{Problem}
\newtheorem{example}{Example}
\newtheorem{assumption}{Assumption}
\newtheorem*{metaq}{Meta Question}
\newtheorem{question}{Question}
\title{Maximizing utility in multi-agent environments by anticipating the behavior of other learners
}
\author{
     Angelos Assos \thanks{MIT CSAIL; \texttt{assos@mit.edu}.}
    \and 
    Yuval Dagan \thanks{Tel Aviv University, CS.}
    \and
    Constantinos Daskalakis \thanks{MIT CSAIL \&  Archimedes AI; \texttt{costis@csail.mit.edu}.}
}
\DeclareMathOperator{\Val}{Val}
\DeclareMathOperator{\argmax}{argmax}
\DeclareMathOperator{\argmin}{argmin}
\DeclareMathOperator{\BR}{BR}
\DeclareMathOperator{\MinMaxStrats}{MinMaxStrats}    
\DeclareMathOperator{\Var}{Var}
\begin{document}
\maketitle

\begin{abstract}
      Learning algorithms are often used to make decisions in sequential decision-making environments. In multi-agent settings, the decisions of each agent can affect the utilities/losses of the other agents. Therefore, if an agent is good at anticipating the behavior of the other agents, in particular how they will make decisions in each round as a function of their experience that far, it could try to judiciously make its own decisions over the rounds of the interaction so as to influence the other agents to behave in a way that ultimately benefits its own utility. In this paper, we study repeated two-player games involving two types of agents: a learner, which employs an online learning algorithm to choose its strategy in each round; and an optimizer, which knows the learner's utility function and the learner's online learning algorithm. The optimizer wants to plan ahead to maximize its own utility, while taking into account the learner's behavior. We provide two results: a positive result for repeated zero-sum games and a negative result for repeated general-sum games. Our positive result is an algorithm for the optimizer, which exactly maximizes its utility against a learner that plays the Replicator Dynamics --- the continuous-time analogue of Multiplicative Weights Update (MWU). Additionally, we use this result to provide an algorithm for the optimizer against MWU, i.e.~for the discrete-time setting, which guarantees an average utility for the optimizer that is higher than the value of the one-shot game. Our negative result shows that, unless P=NP, there is no Fully Polynomial Time Approximation Scheme (FPTAS) for maximizing the utility of an optimizer against a learner that best-responds to the history in each round. Yet, this still leaves open the question of whether there exists a polynomial-time algorithm that optimizes the utility up to $o(T)$.
\end{abstract}

\section{Introduction}

With the increased use of learning algorithms as a means to optimize agents' objectives in unknown or complex environments, it is inevitable that they will be used in multi-agent settings as well. This encompasses scenarios where multiple agents are repeatedly taking actions in the same environment, and their actions influence the payoffs of the other players. For example, participants in repeated online auctions use learning algorithms to bid, and the outcome, who gets to win and how much they pay, depends on all the bids (see e.g.~\cite{nekipelov2015econometrics,noti2021bid}). Other examples arise in contract design (see e.g.~\cite{guruganesh2024contracting}), Bayesian persuasion (see e.g.~\cite{chen2023persuading}), competing retailers that use learning algorithms to set their prices (see e.g.~\cite{du2019pricing,calvano2020artificial}), and more.


It is thus natural to contemplate the following: in a setting where multiple agents take actions repeatedly, using their past observations of the other players' actions to decide on their future actions, what is the optimal strategy for an agent? This question is rather difficult, hence players often resort to simple online learning algorithms such as \emph{mean-based} learners: those algorithms select actions that approximately maximize the performance on the past, and include methods such as MWU, Best Response Dynamics, FTRL, FTPL, etc; see e.g.~\cite{cesa2006prediction}. Yet, when an agent knows that the other agents are using mean-based learners, it can it can adjust its strategy to take advantage of this knowledge. This was shown in specific games~\cite{braverman2018selling,deng2019strategizing}; however, in general settings it is not known how to best play against mean-based learners, even if the agent knows \emph{everything} about the learners, and can predict with certainty what actions they would take as a function of the history of play.
This raises the following question:
\begin{metaq}
    In a multi-agent environment where agents repeatedly take actions, what is the best strategy for an agent, if it knows that the other agents are using mean-based learners to select their actions? Can it plan ahead if it can predict how the other agents will react? Can such an optimal strategy be computed efficiently?
\end{metaq}

\paragraph{Setting.} We study environments with two agents: a \emph{learner} who uses a learning algorithm to decide on its actions, and an \emph{optimizer}, that plans its actions, by trying to optimize its own reward, taking into account the learner's behavior. The setting is modeled as a repeated game: in each iteration $t=1,\dots,T$, the optimizer selects a strategy $x(t)$, which is a distribution over a finite set of \emph{pure actions} $\mathcal{A}=\{a_1,\dots,a_n\}$. At the same time, the learner selects a distribution $y(t)$ over $\mathcal{B}=\{b_1,\dots,b_m\}$. The strategies $x(t)$ and $y(t)$ are viewed as elements of $\mathbb{R}^n$ and $\mathbb{R}^m$, respectively, and the elements of $\mathcal{A}$ and $\mathcal{B}$ are identified with unit vectors.
In each round, each player gains a utility, which is a function of the strategies played by both agents. The utilities for the optimizer and the learner equal $x(t)^\top A y(t)$ and $x(t)^\top B y(t)$, respectively, where $A,B \in \mathbb{R}^{n\times m}$. The goal of each player is to maximize their own utility, which is summed over all rounds $t=1,\dots,T$. We split our following discussion according to the study of zero-sum and general-sum games, as defined below.

\paragraph{Zero-sum games.}
\emph{Zero-sum} games are those where $A=-B$. Namely, the learner and the optimizer have opposing goals. Such games have a \emph{value}, which determines the best utility that the players can hope for, if they are both playing optimally. It is defined as:
\[
\Val(A) = \max_{x \in \Delta(\mathcal{A})} \min_{y \in \Delta(\mathcal{B})} x^\top A y = \min_{y \in \Delta(\mathcal{B})} \max_{x \in \Delta(\mathcal{A})} x^\top A y~.
\]
The two quantities in the definition above are equal, as follows from von Neumann's minmax theorem. The first of these quantities implies that the optimizer has a strategy that guarantees them a utility of $\Val(A)$ against any learner, and the second quantity implies that the learner has a strategy which guarantees that the optimizer's utility is at most $\Val(A)$. Such strategies, namely, those that guarantee minmax value, are termed \emph{minmax strategies}.

If the learner would have used a minmax strategy in each round $t$, then the optimizer could have only received a utility of $T \Val(A)$. Yet, when the learner uses a learning algorithm, the optimizer can receive higher utility.
A standard theoretical guarantee in online learning is \emph{no-regret}. When a learner uses a \emph{no-regret} algorithm to play a repeated game, it is guaranteed that the optimizer's reward after $T$ rounds is at most $T \Val(A) + o(T)$. Yet, $T$ is finite, and the $o(T)$-term can be significant. Consequently, we pose the following question, which we study in Section~\ref{sec:int:zero}.
\begin{question}
    In repeated zero-sum games between an optimizer and a learner, when can the optimizer capitalize on the sub-optimality of the learner's strategy, to obtain significantly higher utility than the value of the game? Can the optimizer's algorithm be computationally efficient?
\end{question}

\paragraph{General-sum games.}
Games where $B\ne -A$ are termed \emph{general-sum}. These games are significantly more intricate: they do not posses a value or minmax strategies. In order to study such games, various notions of \emph{game equilibria} have been proposed, such as the celebrated Nash Equilibrium. In the context related to the topic of our paper, if the optimizer could commit on a strategy, and the other player would best respond to it, rather than using a learning algorithm, then the optimizer could get the \emph{Stackelberg} value, by playing according to the \emph{Stackelberg Equilibrium}. Further, there is a polynomial-time algorithm to compute the optimizer's strategy in this case \cite{CS06, PSTZ19, CAK23}.

Against a learner that uses a no-regret learning algorithm, the optimizer could still guarantee the Stackelberg value of the game, by playing according to the Stackelberg Equilibrium. Yet, it was shown that in certain games, the optimizer can gain up to $\Omega(T)$ more utility compared to the Stackelberg value when they are playing against mean-based learners \cite{braverman2018selling,deng2019strategizing,mansour2022strategizing,guruganesh2024contracting,CL23}. However, this often requires playing a carefully-planned strategy which changes across time.  While such a strategy could be computed in polynomial-time for specific games \cite{braverman2018selling,cai2023selling,deng2019prior}, no efficient algorithm was known for general games. \citet{deng2019strategizing} devised a control problem whose solution is approximately the optimal strategy for the optimizer, against mean-based learners. Yet, they do not provide a computationally-efficient algorithm for this control problem. Finding an optimal strategy for the learner was posed as an open problem \cite{deng2019strategizing, guruganesh2024contracting,brown2024learning}, that can be summarized as follows:
\begin{question}
    In repeated general-sum games between an optimizer and a mean-based learner, is there a polynomial-time algorithm to find an optimal strategy for the optimizer?
\end{question}
We study this question in Section~\ref{sec:int:lb}.

\subsection{Preliminaries}
We continue with some definitions that will be helpful for the analysis:

\begin{definition}[Historical Rewards for the learner]
    The historical rewards for the learner in continuous games at time $t$, denoted by $h(t) \in \mathbb{R}^m$, is an $m$ dimensional vector that corresponds to the sum of rewards achieved by each action of the learner against the history of play in the game so far. We assume a general setting where the learner at $t=0$ might have non-zero historical rewards, i.e. $h(0) \in \mathbb{R}^m$. If the strategy of the optimizer is $x : [0,t] \to \Delta(\mathcal{A})$ we get:
    \[
        h(t) = h(0) + \int_0^t B^{\top}x(t) dt 
    \]
    In the discrete setting, given that the optimizer has played $x_0, x_1, \dots, x_{t-1} \in \Delta(\mathcal{A})$, we have for $t \geq 1$:
    \[
        h(t) = h(0) + \sum_{i=0}^{t-1} B^{\top}x_i
    \]
\end{definition}

\begin{definition}[Value of the game]
    \label{def:value-of-game}
    In a zero sum case, where $B = -A$ we define the value of the game for the optimizer as $\Val(A)$:
    \[
        \Val(A) =  \min_{y \in \Delta(\mathcal{B})} \max_{x \in \Delta(\mathcal{A})} x^{\top}Ay
    \]
\end{definition}

\begin{definition}[Best Response Set]
    \label{def:BR-set}
    For a given strategy of the optimizer $x \in \Delta(\mathcal{A})$, denote by $\BR(x)$ the set of best responses by the learner:
    \[
        \BR(x) = \left\{e_i | i \in \argmin_{i \in [m]}\left[x^{\top}Ae_i\right] \right \}
    \]
\end{definition}

\begin{definition}[Min-max strategies]
    \label{def:min-max-strats}  
    Denote with $\MinMaxStrats_{opt}(A)$ the set of strategies of the optimizer that achieve the value of the game.
    \[
        \MinMaxStrats(A) = \left\{ x \in \Delta(\mathcal{A}) ~\middle|~ \Val(A) = \min_{i \in [m]} x^{\top} Ae_i\right\}
    \]
\end{definition}

\subsection{Algorithms for the learner}
For completeness, we provide pseudocode for the algorithms.

\begin{definition}[MWU Algorithm]
    \label{def:MWU-algo}
    Fix the step size $0 \leq \eta \leq \frac{1}{2}$. In a game where the utility matrix of the learner is$B$, the MWU algorithm is as follows:
    \begin{algorithm}[H]
        \caption{MWU algorithm}\label{alg:mwu}
        \begin{algorithmic}[1]
            \Procedure{MWU}{$B,T,\eta$}
            \State $h(t) = (0,0, \dots, 0)^{\top}$
            \For{$t = 1, 2, \dots, T$}
                \State $y \gets \left( \frac{e^{\eta \cdot h_1(t)}}{\sum_{i=1}^m e^{\eta \cdot h_i(t)}}, \dots, \frac{e^{\eta \cdot h_m(t)}}{\sum_{i=1}^m e^{\eta \cdot h_i(t)}}\right)$
                \State Submit $y$
                \State Observe $x_tt$
                \State $h(t+1) = h(t) + B^{\top} x$
            \EndFor
            \EndProcedure
        \end{algorithmic}
    \end{algorithm}
\end{definition}

\begin{definition}[Replicator Dynamics]
    \label{def:replicator-dynamics}
    Suppose $x : [0,T] \to \Delta(\mathcal{A})$ is a strategy for the optimizer. If the learner is using the replicator dynamics, the strategy for the learner at time $t$ is given by:
    \begin{equation}\label{eq:MWU}
        y_i(t) = \frac{\exp\left(\eta\int_0^t x(s)^\top B e_i\right)}{\sum_{j=1}^n \exp\left(\eta\int_{0}^{t} x(s)^\top B e_j\right)}, i=1,2,\dots, m~.
    \end{equation}
\end{definition}

\begin{definition}[Best Response Algorithm]
\label{def:BR-algo}
In a game where the utility matrix of the learner is $B$, the best response algorithm is as follows, assuming we break ties lexicographically (i.e. if $h_i(t) = h_j(t)$, then $b_i$ beats $b_j$ if $i < j$):
\begin{algorithm}[H]
    \caption{Best Response algorithm}\label{alg:bra}
        \begin{algorithmic}[1]
            \Procedure{BRA}{$B,T$}
            \State $h(0) = (0,0, \dots, 0)^{\top}$
            \For{$t = 0, 1, \dots, T-1$}
                \State $y \gets b_{\argmax_{i \in [m]} h_i(t)}$  
                \State Submit $y$
                \State Observe $x_t$
                \State $h(t+1) = h(t) + B^{\top} x$
            \EndFor
            \EndProcedure
        \end{algorithmic}
    \end{algorithm}
\end{definition}

\subsection{Our results} 

\subsubsection{Zero sum games}\label{sec:int:zero}


\paragraph{Continuous-time games.}
We begin our discussion with continuous-time dynamics, where $x(t)$ and $y(t)$ are functions of the continuous-time parameter $t \in [0,T]$. The total reward for the optimizer is $\int_0^T x(t)^\top A y(t) dt$, whereas the learner's utility equals the optimizer's utility multiplied by $-1$.
We assume that the learner is playing the \emph{Replicator Dynamics} with parameter $\eta>0$, which is the continuous-time analogue of the Multiplicative Weights Update algorithm, which plays at every time $t$:
\begin{equation}\label{eq:replicator-intro}
y_i(t) = \frac{\exp\left(\eta\int_0^t x(s)^\top B e_i ds\right)}{\sum_{j=1}^m \exp\left(\eta\int_0^t x(s)^\top B e_j ds\right)}~, \quad i=1,\dots,m~.
\end{equation}
This formula gives a higher weight to actions that would have obtained higher utility in the past. When $\eta$ is larger, the learner is more likely to play the actions that maximize the past utility. The value of $\eta$ usually becomes smaller as $T$ increases, and typically $\eta \to 0$ as $T \to \infty$.
We show that there is a polynomial-time algorithm to find an optimal strategy for the optimizer when playing against a learner that uses the replicator dynamics:
\begin{theorem}[informal]\label{thm:replicator-informal}
    Let $A \in \mathbb{R}^{n\times m}$ be a zero-sum game-matrix and let $\eta,T > 0$. There exists an algorithm, that runs in polynomial time in $n$ and $m$, which finds a strategy $\{x(t)\}_{t \in [0,T]}$ that maximizes the utility of an optimizer against the Replicator Dynamics with parameter $\eta$ (Eq.~\eqref{eq:replicator-intro}). Further, this optimal strategy is constant, namely, there exists some $x^*$ such that $x(t)=x^*$ for all $t\in[0,T]$.
\end{theorem}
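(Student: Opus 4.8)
The plan is to recognize this as an optimal control problem and exploit a potential-function structure that makes the objective telescope. Write $h(t) = h(0) + \int_0^t B^\top x(s)\,ds$ for the learner's cumulative historical reward, so that $\dot h(t) = B^\top x(t)$ and the learner's play is the softmax $y_i(t) = \exp(\eta h_i(t))/\sum_{j=1}^m \exp(\eta h_j(t))$. I would then define the log-sum-exp potential $\Phi(h) = \tfrac{1}{\eta}\log\sum_{j=1}^m \exp(\eta h_j)$ and observe that its gradient is exactly the learner's softmax, $\nabla\Phi(h) = y(h)$.

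The first key step is a telescoping identity. Since the game is zero-sum, $B=-A$, so the optimizer's instantaneous reward is $x(t)^\top A y(t) = -x(t)^\top B y(t) = -\sum_i y_i(t)\,\dot h_i(t) = -\nabla\Phi(h(t))^\top \dot h(t) = -\tfrac{d}{dt}\Phi(h(t))$. Integrating over $[0,T]$ collapses the total reward to a boundary term, $\int_0^T x(t)^\top A y(t)\,dt = \Phi(h(0)) - \Phi(h(T))$, which depends on the entire trajectory only through the endpoint $h(T)$.

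The second step converts this into a finite-dimensional problem. Writing $\bar x = \tfrac{1}{T}\int_0^T x(t)\,dt$ for the time-averaged strategy, we have $h(T) = h(0) + T B^\top \bar x$, so the total reward is a function of $\bar x$ alone. Because $\Delta(\mathcal{A})$ is convex, $\bar x$ ranges over exactly $\Delta(\mathcal{A})$ as $x(\cdot)$ ranges over all feasible paths, and any target average is realized by the constant strategy $x(t)\equiv\bar x$. This immediately yields the ``constant is optimal'' claim: the optimizer loses nothing by committing to the fixed $x^*$ equal to the optimal average, and its problem reduces to $\min_{\bar x \in \Delta(\mathcal{A})} \Phi\bigl(h(0) + T B^\top \bar x\bigr)$.

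Finally, for efficiency, note that $\Phi$ is convex (log-sum-exp) and $\bar x \mapsto h(0) + T B^\top \bar x$ is affine, so the objective is convex in $\bar x$; minimizing a convex function over the simplex is a convex program solvable in polynomial time via interior-point or ellipsoid methods. The main conceptual obstacle is spotting the potential $\Phi$ and the resulting telescoping --- once the objective is seen to depend only on $h(T)$, and hence only on $\bar x$, both the structural claim (optimal $x$ is constant) and the algorithmic claim (convex optimization over the simplex) follow with little further work. The one point I would treat carefully is the word \emph{exactly} in the statement versus the approximate output of a generic convex solver: I would argue that the minimizer of this smooth convex program is characterized precisely enough by its first-order (KKT) conditions to be recovered exactly, or otherwise make explicit the sense in which ``maximizes'' is achieved.
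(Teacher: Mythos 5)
Your proposal is correct and takes essentially the same route as the paper's proof of Theorem~\ref{thm:zero-sum-continuous}: your potential $\Phi$ with $\nabla\Phi = y$ is exactly the antiderivative the paper identifies when it observes that the integrand equals $-\frac{1}{\eta}\frac{d}{dt}\ln\bigl(\sum_{i=1}^m e^{\eta(h_i(0)-e_i^\top A^\top\int_0^t x(s)\,ds)}\bigr)$, after which both arguments note the reward depends only on $\int_0^T x(s)\,ds$, conclude the constant strategy $x^*=\frac{1}{T}\int_0^T x(s)\,ds$ is optimal, and reduce to minimizing the convex log-sum-exp objective over the simplex. Your closing caveat about exact versus approximate optimization is also consistent with the paper, whose algorithmic guarantee (Proposition~\ref{cor:convex-opt}) is likewise an $\epsilon$-approximate solve via Frank--Wolfe.
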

Theorem~\ref{thm:replicator-informal} is restated and proved in Section~\ref{sec:opt} (as Theorem~\ref{thm:zero-sum-continuous}).

We further analyze how larger the optimal achievable reward is, compared to the naive bound of $T \Val(A)$. To present this result, we define for any optimizer's strategy $x$, the set of all best-responses, 
\[\BR(x) = \arg\max_{b\in \mathcal{B}} x^\top B b.\]
This defines a set, and if there are multiple maximizers, $|\BR(x)|>1$. 

\begin{proposition}[informal]\label{prop:analyze-replicator}
    The optimal reward for an optimizer playing against a learner that uses the replicator dynamics with parameter $\eta$ in an $n\times m$ game, is in the range $[T \Val(A), T\Val(A) + \log(m)/\eta]$. Further, as $\eta T \to \infty$, this optimal utility is at least 
    \[
    T \Val(A) + \frac{\log(m/k)}{\eta}~, \qquad \text{where }
    k = \min_{x\in \Delta(\mathcal{A}) \textrm{ minmax strategy}} |\BR(x)|~.
    \]
\end{proposition}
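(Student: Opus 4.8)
The plan is to reduce the optimizer's reward against the replicator dynamics to a single log-partition expression, which makes both the range and the refined asymptotics transparent. Fix any optimizer strategy $x(\cdot)$ (I will not yet assume it is constant), and write $G_j(t) = \int_0^t x(s)^\top A e_j\,ds$ and $Z(t) = \sum_{j=1}^m \exp(-\eta G_j(t))$, so that the replicator weights are $y_i(t) = \exp(-\eta G_i(t))/Z(t)$ (using $B=-A$). The key observation is that the instantaneous reward is an exact logarithmic derivative of the potential $Z$: since $Z'(t) = -\eta \sum_j (x(t)^\top A e_j)\exp(-\eta G_j(t))$, we get $\frac{d}{dt}\log Z(t) = -\eta\, x(t)^\top A y(t)$. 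Integrating from $0$ to $T$ and using $Z(0) = m$ yields the closed form
\[
\int_0^T x(t)^\top A y(t)\,dt \;=\; \frac{1}{\eta}\Big(\log m - \log Z(T)\Big).
\]
Everything then comes down to estimating $Z(T) = \sum_j \exp(-\eta G_j(T))$.

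For the \emph{upper} end of the range I would lower-bound $Z(T)$ by its largest summand, $Z(T)\ge \exp(-\eta \min_j G_j(T))$, so that the reward is at most $\min_j G_j(T) + \tfrac{1}{\eta}\log m$. Writing $\bar x = \tfrac1T\int_0^T x(s)\,ds \in \Delta(\mathcal{A})$ gives $\min_j G_j(T) = T\min_j \bar x^\top A e_j \le T\,\Val(A)$ by the definition of the value, which holds for \emph{any} (even time-varying) strategy and so bounds the optimum by $T\Val(A) + \log(m)/\eta$. For the \emph{lower} end I would simply plug in a constant minmax strategy $x^*$: then $G_j(T) = T\,(x^*)^\top A e_j \ge T\,\Val(A)$ for every $j$, hence $Z(T) \le m\exp(-\eta T\Val(A))$ and the reward is at least $T\Val(A)$. (Alternatively one may invoke the theorem that the optimum is attained by a constant strategy and only optimize over such $x$, but the two displayed estimates already pin down the range.)

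For the refined asymptotic bound I would take the constant minmax strategy $x^*$ that minimizes $|\BR(x^*)|$, so $|\BR(x^*)| = k$. Exactly $k$ of the values $(x^*)^\top A e_j$ equal $\Val(A)$, while the remaining indices $j\notin \BR(x^*)$ satisfy $(x^*)^\top A e_j = \Val(A) + \delta_j$ with a uniform gap $\delta := \min_{j\notin\BR(x^*)}\delta_j > 0$. Factoring $\exp(-\eta T\Val(A))$ out of $Z(T)$ gives $Z(T) = \exp(-\eta T\Val(A))\,S$ with $S = k + \sum_{j\notin\BR(x^*)}\exp(-\eta T\delta_j)$, and hence
\[
\int_0^T (x^*)^\top A y(t)\,dt \;=\; T\,\Val(A) + \frac{1}{\eta}\log\frac{m}{S}.
\]
Since each extra summand is at most $\exp(-\eta T\delta)\to 0$ as $\eta T\to\infty$, we have $S\to k$, so $\eta$ times the excess over $T\Val(A)$ converges to $\log(m/k)$; because $S\ge k$ this excess approaches $\log(m/k)/\eta$ from below, yielding the claimed ``at least $T\Val(A)+\log(m/k)/\eta$'' in the limit $\eta T\to\infty$.

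The cleanest and most load-bearing step is the log-partition identity, which turns the whole problem into estimating a single number $Z(T)$. I expect the main obstacle to be two points of care rather than any hard calculation: first, making the upper bound hold against \emph{arbitrary} time-varying strategies (handled by the time-average $\bar x$ together with the minmax definition of $\Val$, and optionally by the constancy theorem); and second, quantifying the error term in the asymptotic regime, where the convergence $S\to k$ is driven by the strict gap $\delta>0$ between $\Val(A)$ and the next-best learner response, so that the statement is correctly read as $\eta\cdot(\text{optimum}-T\Val(A)) \to \log(m/k)$.
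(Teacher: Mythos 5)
Your proposal is correct and follows essentially the same route as the paper: the exact log-partition identity $\int_0^T x^\top A y\,dt = \frac{1}{\eta}(\log m - \log Z(T))$ is precisely the integration underlying Theorem~\ref{thm:zero-sum-continuous} (with the reduction to the time-average $\bar{x}$ playing the same role as the paper's observation that the reward depends only on $\int_0^T x(s)\,ds$), your largest-summand versus $m$-times-largest-summand bounds on $Z(T)$ reproduce the estimates in Proposition~\ref{zero-sum-value}, and your gap-driven argument that $S = k + \sum_{j\notin\BR(x^*)} e^{-\eta T \delta_j} \to k$ is exactly the mechanism of Proposition~\ref{value-continuous-as-T-goes-to-infty}. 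Your reading of the asymptotic claim as $\eta\cdot(\text{optimum} - T\Val(A)) \to \log(m/k)$, with the excess approaching from below, also matches the paper's $(1+o_{\eta T}(1))$ formulation.
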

The proof can be found in Section~\ref{sec:opt} (restated as Proposition~\ref{zero-sum-value} and Proposition~\ref{value-continuous-as-T-goes-to-infty}). We note that the limiting utility guaranteed in Proposition~\ref{prop:analyze-replicator} is obtained by playing constantly any minmax strategy $x$ that attains the minimum in the definition of $k$ above.

\paragraph{Discrete-time games.}
We now move to the discrete-time setting. The learner is assumed to be playing the Multiplicative-Weights update algorithm, defined by:
\begin{equation}\label{eq:MWU-intro}
y_i(t) = \frac{\exp\left(\eta\sum_{s=0}^{t-1} x(s)^\top B e_i\right)}{\sum_{j=1}^m \exp\left(\eta\sum_{s=1}^{t-1} x(s)^\top B e_j\right)}, \quad i=1,2,\dots,m~.
\end{equation}
We show that if the learner constantly plays the strategy $x^*$ that is optimal for continuous-time, the obtained utility against MWU can only be higher, compared to playing against the Replicator Dynamics in continuous-time, with the same step size $\eta$:
\begin{proposition}[informal]\label{prop:cont-vs-disc}
    Let $A$ be a zero-sum game matrix and $\eta>0$ and $T\in\mathbb{N}$. Let $x^* \in \Delta(\mathcal{A})$ be the optimal strategy against the replicator dynamics, from Theorem~\ref{thm:replicator-informal}. Then, the utility in the discrete time, achieved by an optimizer which plays $x(t) = x^*$ for all $t\in \{1,\dots,T\}$ against MWU with parameter $\eta$, is at least the utility achieved in the continuous-time by playing $x^*$ against the replicator dynamics with the same parameters $\eta,T$.
\end{proposition}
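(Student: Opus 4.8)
The plan is to fix the optimizer's constant strategy $x(t)=x^*$ in both the continuous- and discrete-time worlds and reduce both total utilities to expressions built from a single scalar function of the continuous-time parameter. Write $g = B^\top x^* \in \mathbb{R}^m$ for the vector of per-round learner gains. Under constant play, the learner's historical reward is $h(s) = h(0) + s\,g$ in continuous time and $h(t) = h(0) + (t-1)\,g$ at discrete round $t$, so the replicator distribution at continuous time $s$ and the MWU distribution at round $t$ coincide exactly when $s = t-1$. Denoting by $y(s)$ the replicator distribution at time $s$, I would set
\[
f(s) \;=\; x^{*\top} A\, y(s) \;=\; -\sum_{i=1}^m g_i\, y_i(s),
\]
the optimizer's instantaneous reward, where the second equality uses $B=-A$ so that $x^{*\top} A e_i = -g_i$. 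Then $U_{\mathrm{cont}} = \int_0^T f(s)\,ds$, while the discrete-time utility, using the index correspondence above together with the fact that round $1$ of MWU plays the uniform distribution (matching $s=0$), is the \emph{left} Riemann sum
\[
U_{\mathrm{disc}} \;=\; \sum_{t=1}^T f(t-1) \;=\; \sum_{s=0}^{T-1} f(s).
\]

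The crux is then a monotonicity statement: I would show that $f$ is non-increasing, equivalently that the learner's expected gain $\mu(s) := \sum_i g_i\, y_i(s) = -f(s)$ is non-decreasing. This follows from the replicator ODE. Differentiating $y_i(s) = e^{\eta s g_i}/\sum_j e^{\eta s g_j}$ gives $\dot y_i = \eta\, y_i\,(g_i - \mu)$, and hence
\[
\dot\mu(s) \;=\; \sum_i g_i\,\dot y_i(s) \;=\; \eta\Big(\sum_i g_i^2\, y_i(s) - \mu(s)^2\Big) \;=\; \eta\,\Var_{y(s)}(g) \;\ge\; 0,
\]
where $\Var_{y(s)}(g)$ is the variance of the coordinates $g_i$ under the distribution $y(s)$. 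Thus $\mu$ is non-decreasing and $f$ is non-increasing. The computation is unaffected by a nonzero $h(0)$, which only reparametrizes the starting point of the dynamics.

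Finally I would assemble the two pieces. Since $f$ is non-increasing, on each unit interval $f(s) \le f(t-1)$ for $s \in [t-1,t]$, so $\int_{t-1}^{t} f(s)\,ds \le f(t-1)$ for every $t=1,\dots,T$; summing over $t$ yields
\[
U_{\mathrm{cont}} \;=\; \int_0^T f(s)\,ds \;\le\; \sum_{t=1}^T f(t-1) \;=\; U_{\mathrm{disc}},
\]
which is exactly the claimed inequality.

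I expect the step requiring the most care to be twofold. First, pinning down the exact off-by-one correspondence between the MWU round index and the continuous-time parameter, since the direction of the inequality hinges on $U_{\mathrm{disc}}$ being the \emph{left} Riemann sum of a non-increasing function — the right Riemann sum would give the reverse inequality. Second, the monotonicity of $\mu$, which is the only place where the specific structure of the Replicator Dynamics (rather than merely constant play) enters, through the variance identity $\dot\mu = \eta\,\Var_{y(s)}(g)$. Everything else is the routine comparison of a monotone function's integral with its left Riemann sum.
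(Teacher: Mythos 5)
Your proposal is correct and follows essentially the same route as the paper's proof of Proposition~\ref{thm:zero-sum-discrete}: reduce both utilities to the same scalar function $f$ of time (discrete utility being its left Riemann sum), then show $f$ is non-increasing because its derivative is $-\eta$ times a variance, whence $f(t-1) \ge \int_{t-1}^{t} f(s)\,ds$ on each unit interval. Your derivation of the monotonicity via the replicator ODE identity $\dot\mu = \eta\,\Var_{y(s)}(g)$ is just a repackaging of the paper's direct differentiation of the softmax-weighted average (and, incidentally, gets the sign bookkeeping cleaner than the paper's displayed formula for $f'(t)$, which contains a typo).
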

The proof can be found in Section~\ref{sec:opt} (Proposition \ref{thm:zero-sum-discrete}).

Proposition~\ref{prop:cont-vs-disc} implies that Proposition~\ref{prop:analyze-replicator} provides lower bounds on the achievable utility of this constant discrete-time strategy. In order to further analyze its performance, we would like to ask how much larger the discrete-time utility of the optimizer can be from the optimal continuous-time utility. We include the following statement, which follows from a standard approach:
\begin{proposition}\label{prop:disc-cont}
    Let $T \in \mathbb{N}$ and $\eta > 0$. For any zero-sum game whose utilities are bounded in $[-1,1]$, the best discrete-time optimizer obtains a utility of at most $\eta T/2$ more than the best continuous-time optimizer. Further, there exists a game where the best discrete-time optimizer achieves a utility of $\tanh(\eta) T/2 = (\eta -O(\eta^2))T/2$ more than the optimal continuous-time optimizer, and $\tanh(\eta)T/2$ more than the discrete-time optimizer from Proposition~\ref{prop:cont-vs-disc}.
\end{proposition}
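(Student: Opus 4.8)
The plan is to treat the two halves separately. For the universal bound, I would \emph{couple} any discrete play to a continuous one. Fix a discrete optimizer strategy $x_0,\dots,x_{T-1}$ against MWU and define the piecewise-constant continuous strategy $x(\tau)=x_{\lfloor\tau\rfloor}$ on $[0,T)$ against the Replicator Dynamics with the same $\eta$. The crucial identity is that for piecewise-constant play $\int_0^t x(s)^\top Be_j\,ds=\sum_{s=0}^{t-1}x_s^\top Be_j$ at every integer $t$, so by Eq.~\eqref{eq:MWU-intro} and Eq.~\eqref{eq:replicator-intro} the learner's Replicator distribution at each integer time $t$ \emph{coincides} with the MWU distribution $y(t)$. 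Writing $g_i(\tau)=x_i^\top A y(\tau)$ for $\tau\in[i,i+1]$, the round-$i$ discrete reward is the left endpoint $g_i(i)$ while the continuous reward over $[i,i+1)$ is the average $\int_i^{i+1}g_i(\tau)\,d\tau$; the whole bound reduces to controlling the per-interval gap $g_i(i)-\int_i^{i+1}g_i$.

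To bound this gap I would differentiate the Replicator formula: $\dot y_j=\eta y_j\bigl(x_i^\top Be_j-x_i^\top By\bigr)$, and substituting $B=-A$ and $u_j:=x_i^\top Ae_j$ gives the clean identity $\dot g_i=-\eta\,\Var_{y(\tau)}(u)$. Since every $u_j=x_i^\top Ae_j$ lies in $[-1,1]$, Popoviciu's inequality gives $\Var_{y}(u)\le 1$, hence $g_i(i)-g_i(\tau)=\eta\int_i^\tau \Var_{y(s)}(u)\,ds\le\eta(\tau-i)$, and integrating over $[i,i+1]$ bounds the gap by $\eta/2$. In particular $g_i(i)\ge\int_i^{i+1}g_i$, so the discrete utility is always at least the continuous utility of the coupled strategy, and exceeds it by at most $\eta T/2$ after summing the $T$ intervals. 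Since the coupled continuous utility is at most the best continuous utility, the best discrete utility exceeds the best continuous utility by at most $\eta T/2$.

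For tightness I would take matching pennies, $A=\begin{pmatrix}1&-1\\-1&1\end{pmatrix}$ and $B=-A$, whose entries lie in $[-1,1]$ and with $\Val(A)=0$. Parametrizing the optimizer by $r=2p-1\in[-1,1]$, the round reward is $r(y_1-y_2)$ and the learner's log-odds equal $\eta$ times the cumulative reward gap $h_1-h_2=-2S_t$, giving $y_1-y_2=\tanh(-\eta S_t)$ with $S_t=\sum_{s<t}r_s$. In continuous time Theorem~\ref{thm:replicator-informal} says the optimum is constant, and a direct check shows that any constant $r\ne0$ yields a strictly negative integral while $r=0$ (i.e.\ $x^*=(1/2,1/2)$) yields $0$; hence the continuous optimum is exactly $0$, and, by the same indifference, playing $x^*$ constantly in discrete time keeps the learner at $(1/2,1/2)$ and also yields $0$---this being precisely the strategy of Proposition~\ref{prop:cont-vs-disc}. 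Finally the alternating strategy $r_t=(-1)^t$ produces $S_t=0$ on even rounds (reward $0$) and $S_t=1$ on odd rounds (reward $(-1)\tanh(-\eta)=\tanh\eta$), for a total of $\lfloor T/2\rfloor\tanh\eta$. Thus for even $T$ the best discrete optimizer beats both the continuous optimum and the Proposition~\ref{prop:cont-vs-disc} strategy by at least $\tanh(\eta)T/2=(\eta-O(\eta^2))T/2$, matching the $\eta T/2$ upper bound up to lower order.

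The main obstacle in the first half is the coupling identity---realizing that piecewise-constant optimizer play makes the discrete MWU and continuous Replicator learners agree exactly at integer times---and then extracting the \emph{sharp} constant $\eta/2$, which hinges on the exact variance identity $\dot g_i=-\eta\Var_{y}(u)$ together with the tight bound $\Var_{y}(u)\le1$; a cruder Lipschitz estimate only gives $O(\eta)$ per round. On the tightness side, the explicit alternating strategy already delivers the lower bound stated in the proposition. If one further wanted the \emph{exact} claim that no discrete strategy beats $\lfloor T/2\rfloor\tanh\eta$ on matching pennies, the obstacle is that the single-round reward $-r\tanh(\eta S)$ can exceed $\tanh(\eta)/2$ near $S=0$; ruling out a better strategy would require a telescoping argument with potential $P(S)=-\eta^{-1}\log\cosh(\eta S)$, writing $\sum_t -r_t\tanh(\eta S_t)=P(S_T)+\sum_t\int_{S_t}^{S_{t}+r_t}\bigl(\tanh(\eta u)-\tanh(\eta S_t)\bigr)\,du$ and bounding the paired nonnegative increments by $\tanh\eta$.
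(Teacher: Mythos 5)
Your proposal is correct and matches the paper's own argument essentially step for step: the upper bound via coupling the discrete play to a piecewise-constant continuous strategy (so the MWU and Replicator distributions agree at integer times) and bounding the within-interval decay through the derivative identity $f'=-\eta\Var_{y}(u)\ge-\eta$, and the lower bound via matching pennies with the alternating strategy yielding $\tanh(\eta)$ on every other round, exactly as in Propositions~\ref{thm:difference-continuous-discrete} and~\ref{thm:matching-pennies}. Your closing remarks on the sharpness of the per-round constant and on an exact $\lfloor T/2\rfloor\tanh\eta$ upper bound for matching pennies go beyond what the proposition requires but are consistent with the paper.
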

The proof can be found in Section~\ref{sec:opt} (Propositions \ref{thm:matching-pennies} and \ref{thm:difference-continuous-discrete}). 

Proposition~\ref{prop:disc-cont} implies that in some cases, the best discrete-time optimizer can gain $T \Val(A) + \Omega(\eta T)$. We would like to understand in which games this is possible for any choice of $\eta \in (0,1)$. We provide a definition that guarantees that this would be possible:
\begin{assumption} \label{a:no-pure}
There exists a minmax strategy $x$ for the optimizer such that there exist two best responses for the learner, $b_{i_1},b_{i_2} \in \BR(x)$, which do not coincide on $\mathrm{support}(x)$. Namely, there exists an action $a_k \in \mathrm{support}(x)$ such that $a_k^\top A b_{i_1} \ne a_k^\top A b_{i_2}$.
\end{assumption}
To motivate Assumption~\ref{a:no-pure}, notice that in order to achieve a gain of $\Omega(\eta T)$ for any $\eta$, the discrete optimizer has to be significantly better than the continuous-time optimizer. For that to happen the learner would have to change actions frequently. Indeed, the difference between the discrete and continuous learners is that the discrete learner is slower to change actions: they only change actions at integer times, whereas the continuous learner could change actions at each real-valued time. In order to change actions, they need to have at least two good actions, and this is what Assumption~\ref{a:no-pure} guarantees. We derive the following statement:

\begin{proposition}\label{prop:gain-etaT}
    For any zero-sum game $A \in \mathbb{R}^{n\times m}$ that satisfies Assumption~\ref{a:no-pure}, $\eta\in (0,1)$ and $T \in\mathbb{N}$, there exists an optimizer, such that against a learner that uses MWU with step size $\eta$, achieves a utility of at least
   $
    T\Val(A) + \Omega(\eta T),
   $
    where the constant possibly depends on $A$.
\end{proposition}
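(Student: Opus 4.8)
The plan is to let the optimizer play a constant-magnitude, sign-alternating perturbation of the minmax strategy $x$ from Assumption~\ref{a:no-pure}, and to extract the gain from the learner's log-sum-exp potential. First I would construct a direction $v \in \mathbb{R}^n$ with $\mathbf{1}^\top v = 0$, $\mathrm{support}(v) \subseteq \mathrm{support}(x)$, and $\gamma := v^\top A(e_{i_1} - e_{i_2}) \neq 0$. Such a $v$ exists: writing $f = A(e_{i_1}-e_{i_2})$, Assumption~\ref{a:no-pure} gives $f_k \neq 0$ for some $k \in \mathrm{support}(x)$, while $x^\top f = \Val(A) - \Val(A) = 0$ because $b_{i_1}, b_{i_2} \in \BR(x)$. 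Hence $f$ cannot be constant on $\mathrm{support}(x)$ (a nonzero constant would contradict $x^\top f = 0$, a zero constant would contradict $f_k \neq 0$), so among the zero-sum vectors supported on $\mathrm{support}(x)$ there is one not orthogonal to $f$. Fixing a constant $\epsilon > 0$ small enough that $x \pm \epsilon v \in \Delta(\mathcal{A})$, the optimizer plays $x(t) = x + \epsilon (-1)^t v$ for $t = 1, \dots, T$.

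Next I would analyze the MWU learner through the potential $\Psi(t) = \frac{1}{\eta}\log\sum_{i} e^{\eta h_i(t)}$. Writing $c_i(t) = x(t)^\top A e_i$ and using $h_i(t+1) = h_i(t) - c_i(t)$ (since $B = -A$), a direct computation gives the exact identity $x(t)^\top A y(t) = \big(\Psi(t) - \Psi(t+1)\big) + R(t)$, where $R(t) = \frac{1}{\eta}\log \sum_i y_i(t)\, e^{-\eta(c_i(t) - x(t)^\top A y(t))} \geq 0$ by Jensen's inequality. Summing telescopes the potential, so the optimizer's total utility equals $\Psi(0) - \Psi(T) + \sum_t R(t)$. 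Because the signs alternate, $\sum_t (-1)^t \in \{0, \pm 1\}$, so $\sum_t c_i(t) = T\, x^\top A e_i + \epsilon (v^\top A e_i)\sum_t(-1)^t \geq T\Val(A) - O(\epsilon)$ for every $i$ (using $x^\top A e_i \geq \Val(A)$). Combined with $\Psi(T) \leq \frac{1}{\eta}\log m + \max_i h_i(T) = \frac{1}{\eta}\log m - \min_i \sum_t c_i(t)$ and $\Psi(0) = \frac{1}{\eta}\log m$, this yields $\Psi(0) - \Psi(T) \geq T\Val(A) - O(\epsilon)$. Hence the optimizer's gain over $T\Val(A)$ is at least $\sum_t R(t) - O(\epsilon)$, with the $O(\epsilon)$ term a constant independent of $T$.

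It then remains to show $\sum_t R(t) = \Omega(\eta T)$. Since $\eta \leq 1$ and the entries of $A^\top x(t)$ lie in a bounded range depending only on $A$, a second-order expansion of the log-moment-generating function lower-bounds $R(t) \geq c_0\, \eta\, \Var_{y(t)}(A^\top x(t))$ for a constant $c_0 = c_0(A) > 0$. The two-atom bound for the variance gives $\Var_{y(t)}(A^\top x(t)) \geq y_{i_1}(t) y_{i_2}(t)\big(x(t)^\top A (e_{i_1} - e_{i_2})\big)^2 = \epsilon^2 \gamma^2\, y_{i_1}(t) y_{i_2}(t)$, where the last equality uses that both base terms equal $\Val(A)$, so only the perturbation survives. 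Thus it suffices to prove $\sum_t y_{i_1}(t) y_{i_2}(t) = \Omega(T)$. For this I would use two facts guaranteed by the alternating schedule: (i) for any $b_j \notin \BR(x)$ the gap $c_j(t) - c_{i_1}(t)$ is at least a positive constant minus $O(\epsilon)$, so $h_j - h_{i_1} \to -\infty$ linearly and the non-best-response mass decays after a burn-in of $O(1/\eta)$ rounds; and (ii) because the partial sums $\sum_{\tau < t}(-1)^\tau$ stay bounded by $1$, the differences $h_{i}(t) - h_{i'}(t)$ between any two best responses stay bounded by $O(\epsilon)$, so all best responses keep comparable weight. Together these keep $y_{i_1}(t), y_{i_2}(t) = \Omega(1/m)$ on all but the first $O(1/\eta)$ rounds, giving $\sum_t y_{i_1}(t) y_{i_2}(t) = \Omega(T)$ and hence gain $\Omega(\eta T)$, with the constant depending on $\epsilon, \gamma, m$, i.e.\ on $A$.

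The main obstacle is the tension between two requirements the perturbation must meet simultaneously: to preserve the potential bound $\Psi(0) - \Psi(T) \geq T\Val(A) - O(1)$ the signs must be nearly balanced, so that no column's accumulated cost dips below $T\Val(A)$; yet to keep the variance term $\Omega(\eta)$ per round the two focal best responses must retain $\Omega(1)$ weight \emph{and} carry a nonzero cost gap at every step. The alternating schedule resolves both at once: its partial sums are bounded, which keeps every best response alive and the accumulated perturbation cost negligible, while $x(t)^\top A(e_{i_1}-e_{i_2}) = \epsilon(-1)^t\gamma \neq 0$ guarantees a fixed per-round cost gap. Verifying the second-order lower bound on $R(t)$ with explicit constants (for $\eta$ up to $1$ and an $A$-dependent cost range) and making the burn-in and simplex-feasibility estimates precise are the remaining technical points, but they do not affect the $\Omega(\eta T)$ scaling.
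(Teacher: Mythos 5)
Your proposal is correct, and your optimizer is exactly the paper's: playing $x(t) = x + \epsilon(-1)^t v$ is the same as the paper's split of the minmax strategy $x$ into $x', x''$ with $(x'+x'')/2 = x$ and $x'^\top A(e_{i_1}-e_{i_2}) \neq x'^\top A e_{i_2}$'s counterpart, played alternately. The source of the gain is also the same: the variance of the payoff vector $A^\top x(t)$ under the MWU-tilted distribution, bounded below via the two best responses $b_{i_1}, b_{i_2}$ carrying mass $\Omega(1/m)$ while their perturbed payoffs differ by $\epsilon|\gamma|$. What differs is the accounting. The paper argues locally: it pairs consecutive rounds, uses $u_i + v_i \ge 2\Val(A)$ to cancel the linear terms within each pair, and differentiates the pair's surplus in $\eta$ to express it as $\int_0^\eta \Var(Z_r)\,dr \ge C_A \eta$; the weight bound comes for free there, since at odd $t$ the cumulative play is exactly $(t-1)x$, making $y_{i_1}(t), y_{i_2}(t)$ among the largest coordinates. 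You argue globally: the telescoping log-sum-exp potential turns the total utility into $\Psi(0) - \Psi(T) + \sum_t R(t)$ with Jensen remainders $R(t) \ge 0$, and the paper's local cancellation is replaced by the bound $\min_i \sum_t c_i(t) \ge T\Val(A) - O(\epsilon)$, which exploits the bounded partial sums of $(-1)^t$. Each route buys something: the paper's pairing incurs no additive slack at all, while your potential identity is exact over the whole horizon and makes transparent why alternation (bounded partial sums) is the right schedule.

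Two small points to tighten, neither affecting the $\Omega(\eta T)$ scaling. First, your ``burn-in of $O(1/\eta)$ rounds'' is both unnecessary and, as stated, problematic when $T = O(1/\eta)$ (the claim must hold for all $T \in \mathbb{N}$ and $\eta \in (0,1)$, so ``all but the first $O(1/\eta)$ rounds'' can be empty): in fact no burn-in is needed, since for every round $t$ and every action $j$ one has $h_j(t) - h_{i_1}(t) = -(t-1)\bigl(x^\top A e_j - \Val(A)\bigr) + O(\epsilon) \le O(\epsilon)$, so $y_{i_1}(t), y_{i_2}(t) \ge e^{-O(\eta\epsilon)}/m = \Omega(1/m)$ from round $1$ onward. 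Second, the additive $-O(\epsilon)$ in your potential bound vanishes for even $T$ (the signs balance exactly), and for odd $T$ it is removed by playing the unperturbed minmax strategy $x$ in the final round, exactly as the paper does for its odd-$T$ case; without this fix, $T\Val(A) - O(\epsilon) + c\,\eta T$ does not dominate $T\Val(A) + \Omega(\eta T)$ in the regime $\eta T \ll \epsilon$.
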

The proof can be found in Section~\ref{sec:opt} (Proposition \ref{prop:zero-sum-assumption}). 

Proposition~\ref{prop:gain-etaT} considers a strategy for the optimizer which takes the minmax strategy $x$ from Assumption~\ref{a:no-pure}, and splits into two strategies: $x', x'' \in \Delta(\mathcal{A})$, such that $(x'+x'')/2 = x$. Further, $x'^\top A b_{i_1} \ne x''^\top A b_{i_2}$, where $i_1,i_2$ are defined in Proposition~\ref{prop:gain-etaT}. It plays $x'$ in each odd round $t$, and $x''$ in each even round. It is possible to show that in each two consecutive iterations, the reward for the optimizer is $2 \Val(A) + \Omega(\eta)$. Summing over $T/2$ consecutive pairs yields the final bound.

\subsubsection{A lower bound for general-sum games} \label{sec:int:lb}

We present the first limitation on optimizing against a mean-based learner. 
Specifically, we study the algorithm which is termed \emph{Best-Response} or \emph{Fictitious Play} \cite{brown1951iterative,robinson1951iterative}. At each time step $t=1,\dots,T$, this algorithm selects an action $y(t)$ that maximizes the cumulative utility for rounds $1,\dots,t-1$:
\begin{equation}\label{eq:maximization-br}
y(t) = \arg\max_{y\in\Delta(\mathcal{B})} \sum_{s=1}^{t-1} x(s)^\top B y
\end{equation}
There is always a maximizer which corresponds to a pure action $a_i \in \mathcal{A}$ and we assume that if there are ties, they are broken lexicographically. In this section, we constrain the optimizer to also play pure actions.

To put this algorithm in context, we note the following connections to general mean-based learners: (1) \emph{mean-based} learners are any algorithms which select an action $y(t)$ that approximately maximizes Eq.~\eqref{eq:maximization-br}; (2) Best-Response is  equivalent to MWU with $\eta \to \infty$. 

We prove that there is no algorithm that approximates the optimal utility of the optimizer up to an approximation factor of $1-\epsilon$, whose runtime is polynomial in $n,m,T$ and $1/\epsilon$:
\begin{theorem}[informal]
    Let $\mathrm{Alg}$ be an algorithm that receives parameters $\epsilon>0$, $m,n,T \in \mathbb{N}$ and utility matrices $A,B$ of dimension $m\times n$ and entries in $[0,1]$, and outputs a control sequence $x(1),\dots,x(T)\in \mathcal{A}$. Let $U$ denote the utility attained by the learner and let $U^*$ denote the optimal possible utility. If $U \geq (1-\epsilon) U^*$ for any instance of the problem, and if $\mathrm{P}\ne\mathrm{NP}$, then $\mathrm{Alg}$ is not polynomial-time in $m,n,T$ and $1/\epsilon$. 
\end{theorem}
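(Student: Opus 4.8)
The plan is to prove this as a hardness-of-approximation statement by reducing from an NP-complete problem, most naturally \textsc{3-SAT}. First I would observe that, because the learner runs Best-Response with deterministic (lexicographic) tie-breaking, the entire interaction is a \emph{deterministic} planning problem: the learner's action $y(t)$ is a fixed function of the optimizer's past pure actions through the historical-reward vector $h(t) = h(0) + \sum_{s<t} B^\top x(s)$, and the optimizer simply wants to choose $x(1),\dots,x(T)\in\mathcal{A}$ to maximize $\sum_t x(t)^\top A y(t)$. Equivalently, this is an optimal-control/shortest-path problem over the state $h(t)$, whose only source of intractability is that the reachable state space is exponentially large. The reduction must exploit exactly this: use $h(t)$ as a memory in which the optimizer's early moves record a candidate truth assignment, and design $A,B,h(0),T$ so that the attainable utility is large precisely when that assignment satisfies the formula.

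Concretely, I would build the instance in two phases encoded through $B$ and $h(0)$. In an \emph{assignment phase}, the optimizer is steered (via the reward matrix $A$, so that only these moves are profitable) to make, for each variable, one of two pure actions corresponding to setting it true or false; each such action shifts a dedicated block of coordinates of $h$. In a \emph{verification phase}, the learner is forced by best-response to step through the clauses one at a time, where the historical rewards are calibrated so that the learner plays a ``satisfied'' action for clause $C_j$ (yielding optimizer reward) iff at least one literal chosen in the assignment phase appears in $C_j$; a single unsatisfied clause costs the optimizer a fixed penalty $\Delta$. The argmax/threshold behavior of best-response, together with lexicographic tie-breaking, is what lets me implement these per-clause ``OR'' tests, and I would choose $h(0)$ and the magnitudes of the entries of $B$ so that one optimizer move can only nudge the relevant comparisons by controlled amounts, all while keeping entries within $[0,1]$.

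With the gadget in place, \textbf{completeness} is the easy direction: given a satisfying assignment, the corresponding ``clean'' action sequence drives the learner through every satisfied-clause action and attains utility $U^* \ge U_{\mathrm{high}}$. The \textbf{soundness} direction is where the real work lies: I must show that if the formula is unsatisfiable then \emph{every} optimizer sequence attains utility at most $U_{\mathrm{low}} = U_{\mathrm{high}} - \Delta$. This is the main obstacle, because the optimizer is not constrained to play clean assignment-like sequences; since $y(t)$ depends on the whole accumulated history, the optimizer has a continuum of ways to try to manipulate $h$ --- interleaving phases, over- or under-shooting coordinates, or exploiting ties --- and I have to argue that none of these evasions lets it pass all clause tests without an underlying satisfying assignment. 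The heart of the argument is an invariant on $h(t)$ showing that whatever the optimizer does, the induced ``effective assignment'' must satisfy every clause whose verification action the learner ever plays, so an unsatisfiable formula necessarily forfeits at least one clause reward.

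Finally, to rule out an \emph{FPTAS} rather than merely establishing NP-hardness of exact optimization, I would control the separation to be inverse-polynomial: by scaling the construction I ensure $\Delta/U_{\mathrm{high}} \ge 1/\mathrm{poly}(n,m,T)$. Then, setting $\epsilon$ just below this ratio, any algorithm guaranteeing $U \ge (1-\epsilon)U^*$ would, after I simulate the deterministic dynamics on its output to compute $U$, output a sequence whose utility exceeds $U_{\mathrm{low}}$ exactly on satisfiable instances, thereby deciding \textsc{3-SAT}; since $1/\epsilon = \mathrm{poly}(n,m,T)$, a running time polynomial in $m,n,T,1/\epsilon$ would be polynomial in the input and force $\mathrm{P}=\mathrm{NP}$. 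The delicate points to get right are keeping $U_{\mathrm{high}}$, $T$, and all matrix entries polynomially bounded so the gap stays inverse-polynomial, and handling tie-breaking so that the two directions of the reduction meet exactly at the threshold $U_{\mathrm{low}}$.
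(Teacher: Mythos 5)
Your proposal is a plausible blueprint, but it defers precisely the step where all of the difficulty lives, so as it stands there is a genuine gap. The soundness invariant --- ``whatever the optimizer does, the induced effective assignment must satisfy every clause the learner checks'' --- is named but never formulated, and it is far from clear it can be realized, because Best-Response has no clock: the learner's move at round $t$ is simply $\argmax_i h_i(t)$, so your two-phase schedule (assignment, then clauses visited one at a time) cannot be imposed externally; it must emerge from cumulative-reward comparisons that the optimizer itself controls through its moves. You would need an explicit pointer/ratchet gadget in $B$ that (a) forces the learner through the clause checks in order, (b) prevents the optimizer from setting a variable both true and false (the analogue of revisiting a vertex), and (c) prevents the optimizer from pumping a clause-satisfied action's cumulative reward directly during the verification rounds rather than via a genuine assignment --- the optimizer plays something every round, and each play perturbs $h$. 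None of these mechanisms is constructed, and the remark that ``$A$ steers the optimizer'' does not help here, since $A$ has no effect on the learner's dynamics; only $B$ does. Without the concrete matrices and the invariant, the completeness/soundness dichotomy is an aspiration, not a proof.

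For comparison, the paper reduces from Hamiltonian Cycle rather than \textsc{3-SAT}, and the choice is not cosmetic: it makes the learner's best response a single ``pointer'' that tracks the head of the path. Optimizer actions are edges; playing $e=(u,v)$ adds $+1$ to the vertex action $v$ and a large negative amount ($-4$, or $-0.1$ for the start vertex $v_1$) to $u$, so after any legal prefix exactly one vertex action is the argmax, namely the current endpoint, and the optimizer earns $1$ per round only by extending the path from that endpoint. Revisits are punished by auxiliary actions $b_{\text{in}_j}$ worth $0.85$ per departure from $v_j$, which reach $1.70$ on a second visit, become the permanent best response, and yield the optimizer zero thereafter; the $-0.1$ on $v_1$ is tuned so the cycle may close at $v_1$ at round $n+1$ exactly once. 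With $T=k=n+1$ this gives hardness of distinguishing reward $T$ from reward at most $T-1$, i.e.\ a gap of $1$ out of $T$, and taking $\epsilon<1/T$ rules out an FPTAS --- the same gap-to-FPTAS conversion as your final paragraph, including the shift-and-scale trick to bring $B$'s entries into $[0,1]$. So your outer argument matches the paper's, but the combinatorial core --- the state-tracking gadget and the proof that no deviating optimizer sequence can cheat it --- is exactly what your writeup leaves unproven.
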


The proof is obtained via a reduction from the Hamiltonian cycle problem, as elaborated on in Section~\ref{sec:lb}. We note two limitations of this result: (1) The lower bound is for $T= n/2+1$, and it shows hardness in distinguishing between the case that the optimal reward is $T$ and the case that the optimal reward is at most $T-1$. It is still open whether one could efficiently find a sequence that optimizes the reward up to $o(T)$; (2) fictitious-play is a fundamental online learning algorithm. Yet, it does not possess the no-regret guarantee. It is still open whether one could obtain maximal utility against no-regret learners such as MWU with a small step size.

\subsection{Related Work}

The game-theoretic modeling of multi-agent environments has long history \cite{cesa2006prediction, You04, FL98}. Such studies often revolve around proving that if multiple agents use learning algorithms to repeatedly play against each other, the avarege history of play converges to various notions of game equilibria \cite{robinson1951iterative,cesa2006prediction}. Recent works have shown that some learning algorithms yield fast convergence to game equilibria and fast-decaying regret, if all players are using the same algorithm \cite{SALS15, daskalakis2021near, ADF+22, APFS22, FAL+22, PSS22, ZFA+23}.

\paragraph{Optimizing against no regret learners.}
\citet{braverman2018selling} initiated a recent thread of works, which studies how a player can take advantage of the learning algorithms run by other agents in order to gain high utility. They showed that in various repeated auctions where a seller sells an item to a single buyer in each iteration $t =1,\dots,T$, the seller can gain nearly the maximal utility that they could possibly hope for, if the learner runs an EXP3 learning algorithm (i.e., the seller's utility can be arbitrarily close to the total welfare). This can be $\Omega(T)$ larger than what they can get if the buyer is strategic. \citet{deng2019strategizing} generalized the study to arbitrary normal-form games. They showed an example for a game where an optimizer that plays against any mean-based learner gets $\Omega(T)$ more than the Stackelberg value, which is what they would get against strategic agents.
The same paper further showed that against no-swap-regret learners, the optimizer cannot gain $\Omega(T)$ more than the Stackelberg value and \citet{mansour2022strategizing} showed that no-swap-regret learners are the only algorithms with this property. \citet{brown2024learning} showed a polynomial time algorithm to compute the optimal strategy for the optimizer against the no-regret learner that is most favorable to the optimizer --- yet, such no-regret learner may be unnatural. Additional work obtained deep insights into the optimizer/learner interactions in general games \cite{brown2024learning,arunachaleswaran2024pareto}, in Bayesian games \cite{mansour2022strategizing} and in specific games such as auctions \cite{deng2019prior, cai2023selling, KN22b, KN22a}, contract design \cite{guruganesh2024contracting} and Bayesian persuasion \cite{CL23}.  

\paragraph{Optimizing against MWU in $2\times 2$-games.} \citet{guo2023optimal} obtain a computationally-efficient algorithm for an optimizer that is playing a zero-sum game against a learner that uses MWU, in games where each player holds $2$ actions, and they analyze that optimal strategy.

\paragraph{Regret lower bounds for online learners.}
Regret lower bounds for learning algorithms are tightly related to upper bounds for an optimizer in zero-sum games. The optimizer can be viewed as an adversary that seeks to maximize the regret of the algorithm. The main difference is that these lower bounds construct worst-case instances, yet, we seek to maximize the utility of the optimizer in any game. Regret lower bounds for MWU and generalizations of it were obtained by \cite{cesa1997use,haussler1995tight,cesa1997analysis,gravin2017tight}. Minmax regret lower bounds against any online learner can be found, for example, in \cite{cesa2006prediction}.

\paragraph{Approximating the Stackelberg value.} Lastly, the problem of computing the Stackelberg equilibrium is well studied \cite{CS06, PSTZ19, CAK23}. Recent works also study repeated games between an optimizer and a learner, where the optimizer does not know the utilities of the learner and its goal is to learn to be nearly as good as the Stackalberg strategy \cite{BBHP15, MTS12, LGH+22,HLNW22}.

\section{Optimizing utility in zero-sum games}\label{sec:opt}

\subsection{Zero-sum Continuous games}

In this section, we elaborate on Theorem~\ref{thm:replicator-informal} on the optimal strategy in continuous-time against the Replicator Dynamics. We ask the question of what is the best reward the optimizer can obtain by knowing that the learner is employing these dynamics.




Suppose the game has been played for some time $t$ and the learner has collected historical rewards $h(t)$. The total reward that the optimizer gains from the remainder of this game can be written as a function of the time left for the game $T-t$, the historical rewards of the learner at time $t$, $h(t)$, and the strategy $x : [0,T-t] \to \Delta(\mathcal{A})$ of the optimizer for the remainder of the game. 
\begin{equation}
    R_{cont}(x, h(t), T - t, A, B) = \int_{0}^{T-t} \frac{\sum_{i=1}^m e^{\eta \left(h_i(u) + e_i^{\top} B^{\top} \int_{0}^u x(s) ds\right)} \cdot e_i^{\top}A^{\top} x(u)}{\sum_{i=1}^m e^{\eta \left(h_i(u) + e_i^{\top} B^{\top} \int_{0}^u x(s) ds\right)}} du
\end{equation}
For simplicity, we can just transfer the time to $0$ and assume that the historical rewards of the learner are just $h(0)$. That way we can rewrite the above definition as:
\begin{equation}
    \label{eq:rewards-cont}
    R_{cont}(x, h(0), \tau, A, B) = \int_{0}^{\tau} \frac{\sum_{i=1}^m e^{\eta \left(h_i(t) + e_i^{\top} B^{\top} \int_{0}^t x(s) ds\right)} \cdot e_i^{\top}A^{\top} x(t)}{\sum_{i=1}^m e^{\eta \left(h_i(t) + e_i^{\top} B^{\top} \int_{0}^t x(s) ds\right)}} dt
\end{equation}
In general we are interested in finding the maximum value that the optimizer can achieve at any moment of the game:
\begin{equation}
    R^*_{cont}(h(0), \tau, A, B) = \max_{x}R_{cont}(x, h(0), \tau, A, B)
\end{equation}
For finding the optimal reward for the optimizer from the beginning of the game we would have to find $R^*_{cont}(\mathbf{0}, T, A, B)$, where $\mathbf{0} = (0,0,\dots,0)^\top$.

The next theorem characterizes the exact optimal strategy for the optimizer, against a learner that uses the Replicator Dynamics in zero sum games (this is the full version of Theorem~\ref{thm:replicator-informal}):
\begin{theorem}
    \label{thm:zero-sum-continuous}
    In a zero-sum continuous game, when the learner is using the Replicator Dynamics, the optimal rewards for the optimizer can be achieved with a constant strategy throughout the game, i.e. $x(t) = x^* \in \Delta(\mathcal{A})$. The optimal reward is obtained by the following formula:
    \begin{equation}
        R^*_{cont}(h(0), T, A, -A) = \max_{x \in \Delta(\mathcal{A})} \left\{\frac{\ln\left(\sum_{i=1}^m e^{\eta h_i(0)}\right) - \ln\left(\sum_{i=1}^m e^{\eta \left ( h_i(0)-Te_i^{\top} A^{\top} x \right)}\right)}{\eta}\right\}
    \end{equation}
    Further, $x^*$ is the maximizer in the formula above.
\end{theorem}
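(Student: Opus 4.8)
The plan is to exploit a \emph{potential/telescoping} structure that is special to the zero-sum case $B=-A$. I would introduce the log-partition potential
\[
\Phi(t) = \frac{1}{\eta}\ln\left(\sum_{i=1}^m e^{\eta h_i(t)}\right),
\]
where $h_i(t) = h_i(0) + e_i^\top B^\top \int_0^t x(s)\,ds$ is the learner's historical reward for action $b_i$. The first step is to differentiate $\Phi$ in $t$. By the chain rule, $\dot\Phi(t)$ equals the softmax-weighted average of the derivatives $\dot h_i(t) = e_i^\top B^\top x(t)$, and these softmax weights are exactly the learner's mixed strategy $y(t)$ from Eq.~\eqref{eq:replicator-intro}. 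Using $B=-A$, we have $\dot h_i(t) = -e_i^\top A^\top x(t)$, so this weighted average equals the negation of the optimizer's instantaneous reward $x(t)^\top A y(t)$. Hence $\dot\Phi(t) = -r(t)$, where $r(t)$ is the integrand defining $R_{cont}$.

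Second, I would integrate this identity over $[0,T]$ to obtain the telescoping formula
\[
R_{cont}(x,h(0),T,A,-A) = \Phi(0) - \Phi(T) = \frac{1}{\eta}\left[\ln\!\Big(\sum_i e^{\eta h_i(0)}\Big) - \ln\!\Big(\sum_i e^{\eta h_i(T)}\Big)\right].
\]
The key consequence is that the total reward depends on the optimizer's strategy \emph{only} through the terminal historical reward $h(T)$, hence only through the cumulative play $X(T) := \int_0^T x(s)\,ds$, since $h_i(T) = h_i(0) - e_i^\top A^\top X(T)$. The entire path of $x$ is irrelevant beyond its integral.

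Third, I would characterize the achievable values of $X(T)$. Because each $x(s)$ lies in $\Delta(\mathcal{A})$, the time-average $X(T)/T$ is again in $\Delta(\mathcal{A})$ by convexity; conversely, every $z\in\Delta(\mathcal{A})$ is realized by the constant strategy $x(s)\equiv z$, giving $X(T)=Tz$. Thus maximizing $R_{cont}$ over all strategies is equivalent to maximizing over $x^*=X(T)/T\in\Delta(\mathcal{A})$, and substituting $h_i(T)=h_i(0)-T\,e_i^\top A^\top x^*$ yields exactly the claimed formula. Finally, since the optimum depends only on $X(T)=Tx^*$, the constant strategy $x(t)\equiv x^*$ attains it, establishing both that a constant strategy is optimal and that $x^*$ is the maximizer of the displayed expression.

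The main obstacle is verifying the first step cleanly: one must check that the softmax weights arising in $\dot\Phi$ coincide with the replicator strategy $y(t)$, and that the zero-sum identity $B=-A$ forces the weighted average of the $\dot h_i$ to equal precisely $-x(t)^\top A y(t)$ rather than something uncontrolled. This alignment between the learner's reward gradient and the optimizer's payoff is exactly what produces the telescoping; without $B=-A$ the potential argument breaks, which is consistent with the general-sum hardness result established later in the paper. The remaining steps — the simplex/convexity characterization of the reachable $X(T)$ and the reduction to constant strategies — are then routine.
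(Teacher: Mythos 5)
Your proposal is correct and is essentially the paper's own argument: the paper likewise recognizes the integrand as the exact derivative of the log-sum-exp potential $-\frac{1}{\eta}\ln\bigl(\sum_i e^{\eta h_i(t)}\bigr)$, evaluates the integral at the endpoints, and observes that the reward depends only on $\int_0^T x(s)\,ds$, so the averaged constant strategy $x^* = \frac{1}{T}\int_0^T x^{opt}(s)\,ds$ achieves the optimum. Your explicit two-sided characterization of the reachable set of $\int_0^T x(s)\,ds$ is slightly more careful than the paper's one-line version of that step, but the route is the same.
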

\begin{proof}
    For zero sum games in continuous time, we have that equation \ref{eq:rewards-cont} becomes:
    \begin{equation}
        \label{eq:rewards-cont-zerosum}
        R_{cont}(x, h(0),T ,A, -A) = \int_{0}^T \frac{\sum_{i=1}^m e^{\eta \left( h_i(0) - e_i^{\top} A^{\top} \int_{0}^tx(s) ds\right)} \cdot e_i^{\top}A^{\top} x(t)}{\sum_{i=1}^m e^{\eta \left(h_i(0) - e_i^{\top} A^{\top} \int_{0}^tx(s) ds\right)}} dt
    \end{equation}
    Notice that 
    \begin{equation*}
        \frac{d}{dt}\left(e^{\eta \left(h_i(0) - e_i^{\top} A^{\top} \int_{0}^tx(s) ds\right)}\right) = e^{\eta\left(h_i(0)-e_i^{\top} A^{\top} \int_{0}^tx(s) ds\right)} \cdot \frac{d}{dt}\left(\eta h_i(0)-\eta \cdot e_i^{\top} A^{\top} \int_{0}^tx(s) ds\right)
    \end{equation*}
    From Leibniz rule we have that $\frac{d}{dt}\int_{0}^t x(s) ds = x(t)$, thus we get:
    \begin{equation*}
        \frac{d}{dt}\left(e^{\eta \left( h_i(0)-e_i^{\top} A^{\top} \int_{0}^tx(s) ds\right)}\right) = -e^{\eta \left(h_i(0)-e_i^{\top} A^{\top} \int_{0}^tx(s) ds\right)} \cdot \eta \cdot e_i^{\top} A^{\top} x(t) 
    \end{equation*}
    Given that, note that we can find a closed form solution for $R_{cont}(x,  h(0), 0)$ as follows:
    \begin{align*}
        R_{cont}(x,  h(0), 0, A, -A) 
        &= \left[-\frac{1}{\eta}\ln\left(\sum_{i=1}^m e^{\eta \left (h_i(0)-e_i^{\top} A^{\top} \int_{0}^tx(s) ds \right)}\right) \right]^T_0 \\
        &=\frac{\ln\left(\sum_{i=1}^m e^{\eta h_i(0)}\right) - \ln\left(\sum_{i=1}^m e^{\eta \left ( h_i(0)-e_i^{\top} A^{\top} \int_{0}^Tx(s) ds \right)}\right)}{\eta}
    \end{align*}
    Suppose the optimal rewards $R^*_{cont}(h(0), 0, A, -A)$ are achieved with $x^{opt}(t)$. Note that the final reward only depends on $\int^T_{0} x(s) ds$, thus the same reward that is achieved by $x^{opt}(t)$, can be achieved by $x^* = \frac{1}{T}\int^T_{0} x^{opt}(s) ds$. Thus there exists a $x^*$ such that:
    \begin{equation}
        R^*_{cont}(h(0), T, A, -A) = \frac{\ln\left(\sum_{i=1}^m e^{\eta h_i(0)}\right) - \ln\left(\sum_{i=1}^m e^{\eta \left ( h_i(0)-e_i^{\top} A^{\top} x^*T \right)}\right)}{\eta}
    \end{equation}
    which is what we wanted.
\end{proof}
The optimal strategy $x^*$ for the optimizer in continuous zero-sum games can be therefore obtained by finding the minimum of the convex function $f(x) = \ln\left(\sum_{i=1}^m e^{\eta \left ( h_i(0)-e_i^{\top} A^{\top} xT \right)}\right)$. We can therefore compute the optimal strategy of the optimizer in an efficient way using techniques from convex optimization. 
    From theorem \ref{thm:zero-sum-continuous} we know that the optimal strategy for the optimizer throughout the game, is just $x(t) = x^*$, where $x^*$ minimizes $f(x)$:
\[
    f(x) = \ln\left(\sum_{i=1}^m e^{\eta \left ( h_i(0)-Te_i^{\top} A^{\top} x \right)}\right)
\]
The above is also known as the log-sum-exp function. This is a convex function, which allows us to compute its minima efficiently. We use the following two lemmas.
\begin{lemma}[Example 5.15 \cite{beck2017first}]
    \label{log-sum-exp-smooth}
    The log-sum-exp function $f : \mathbb{R}^n \to \mathbb{R}$, for which $f(\mathbf{x}) = \ln \left( \sum_{i=1}^m e^{x_i}\right)$, is $1$-smooth with respect to the $\ell_2, \ell_{\infty}$ norms and convex.
\end{lemma}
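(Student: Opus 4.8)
The plan is to derive both claims from the Hessian of $f$, using the standard fact that for a twice-differentiable function, $L$-smoothness with respect to a norm $\|\cdot\|$ is equivalent to the uniform bound $v^\top \nabla^2 f(\mathbf{x})\, v \le L\|v\|^2$ for all $v$ and all $\mathbf{x}$ (this equivalence is clean here because the Hessian will turn out positive semidefinite, so the induced operator norm is attained on the diagonal of the bilinear form). So first I would compute the derivatives. Writing $p_i(\mathbf{x}) = e^{x_i}/\sum_{j} e^{x_j}$ for the softmax vector, a one-line computation gives $\partial_i f = p_i$ and hence $\partial_i\partial_j f = p_i\delta_{ij} - p_i p_j$, i.e.
\[
\nabla^2 f(\mathbf{x}) = \mathrm{diag}(p) - p\,p^\top.
\]

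For convexity I would observe that this Hessian is exactly the covariance matrix of the random index $i \sim p$: for any vector $v$,
\[
v^\top \nabla^2 f(\mathbf{x})\, v = \sum_i p_i v_i^2 - \Big(\sum_i p_i v_i\Big)^2 = \Var_{i\sim p}(v_i) \ge 0,
\]
so $\nabla^2 f(\mathbf{x}) \succeq 0$ everywhere and $f$ is convex.

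The same variance identity delivers both smoothness bounds. Since $\Var_{i\sim p}(v_i) \le \mathbb{E}_{i\sim p}[v_i^2] = \sum_i p_i v_i^2$ and $p$ is a probability vector, I would bound the second moment in two different ways. For the $\ell_\infty$ norm, $\sum_i p_i v_i^2 \le \max_i v_i^2 = \|v\|_\infty^2$, giving $v^\top\nabla^2 f(\mathbf{x})\,v \le \|v\|_\infty^2$. For the $\ell_2$ norm, using $p_i \le 1$, $\sum_i p_i v_i^2 \le \sum_i v_i^2 = \|v\|_2^2$, giving $v^\top\nabla^2 f(\mathbf{x})\,v \le \|v\|_2^2$. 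In both cases the constant is $1$, which is precisely $1$-smoothness with respect to each norm.

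The only step that needs genuine care is the reduction in the first paragraph. For a general norm, $L$-smoothness is the Lipschitz bound $\|\nabla f(\mathbf{x}) - \nabla f(\mathbf{y})\|_* \le L\|\mathbf{x}-\mathbf{y}\|$ with the \emph{dual} norm on the gradient side (here $\ell_1$ is dual to $\ell_\infty$, and $\ell_2$ is self-dual), and passing from this to a uniform bound on the quadratic form $v^\top\nabla^2 f\,v$ requires knowing that for a positive semidefinite Hessian the induced operator norm $\sup_{\|u\|\le 1,\|v\|\le 1} u^\top\nabla^2 f\,v$ coincides with $\sup_{\|v\|\le 1} v^\top \nabla^2 f\,v$. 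This follows from a Cauchy--Schwarz step $u^\top H v \le \sqrt{u^\top H u}\,\sqrt{v^\top H v}$ valid for $H\succeq 0$. Everything else is routine differentiation plus the one-line variance estimate. Since the statement is Example~5.15 of \cite{beck2017first}, one could alternatively just cite it, but the argument above is fully self-contained.
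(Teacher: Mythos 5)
Your proof is correct, and there is nothing in the paper to compare it against: the paper does not prove this lemma at all, it imports it verbatim as Example~5.15 of \cite{beck2017first}. Your argument is the standard one underlying that citation. The Hessian computation $\nabla^2 f(\mathbf{x}) = \mathrm{diag}(p) - pp^\top$ is right, the identification of the quadratic form with $\Var_{i\sim p}(v_i)$ gives convexity immediately, and the two second-moment bounds ($\sum_i p_i v_i^2 \le \|v\|_\infty^2$ since $p$ is a probability vector, and $\le \|v\|_2^2$ since $p_i \le 1$) yield the constant $1$ for both norms. You were also right to flag the only delicate point: $1$-smoothness with respect to $\|\cdot\|_\infty$ means $\|\nabla f(\mathbf{x})-\nabla f(\mathbf{y})\|_1 \le \|\mathbf{x}-\mathbf{y}\|_\infty$, and passing from the quadratic-form bound to this Lipschitz bound needs $\sup_{\|u\|\le 1,\|v\|\le 1} u^\top H v = \sup_{\|v\|\le 1} v^\top H v$, which your Cauchy--Schwarz step $u^\top H v \le \sqrt{u^\top H u}\,\sqrt{v^\top H v}$ legitimately supplies because $H \succeq 0$ here (for indefinite symmetric matrices and non-Euclidean norms this equality can fail, so the caveat is not pedantry). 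One cosmetic note: the lemma as stated in the paper has a domain/index mismatch ($f:\mathbb{R}^n\to\mathbb{R}$ but the sum runs to $m$); that typo is the paper's, not yours, and your proof is insensitive to it.
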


\begin{lemma}[Theorem 3.8, \cite{bubeck2015convex}]
    \label{convex-solver}
    For a function $f : \mathcal{X} \to \mathbb{R}$ that is convex and $\beta$-smooth with respect to a norm $\lVert \cdot \rVert$ and where $R = \sup_{x,y} \lVert x- y \rVert$, the Frank-Wolfe
    algorithm can find $x_t$, after $t$ iterations, for which:
    \[
        f(x_t) - f(x^*) \leq \frac{2\beta R^2}{t+1}
    \]
\end{lemma}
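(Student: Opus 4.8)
The plan is to prove the standard $O(1/t)$ convergence guarantee of the Frank--Wolfe (conditional gradient) method. Recall the algorithm maintains iterates $x_t \in \mathcal{X}$, queries a linear-minimization oracle to obtain a point $y_t \in \argmin_{y \in \mathcal{X}} \langle \nabla f(x_t), y\rangle$, and updates $x_{t+1} = (1-\gamma_t)x_t + \gamma_t y_t$ with step size $\gamma_t = 2/(t+1)$. Since $\mathcal{X}$ is convex and $x_t, y_t \in \mathcal{X}$, every iterate remains feasible, so the diameter bound $\lVert x - y\rVert \le R$ applies to all pairs encountered. I would track the optimality gap $h_t := f(x_t) - f(x^*)$ and prove by induction that $h_t \le 2\beta R^2/(t+1)$.

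First I would invoke $\beta$-smoothness. Writing $x_{t+1} - x_t = \gamma_t(y_t - x_t)$, the smoothness inequality gives
\[
 f(x_{t+1}) \le f(x_t) + \gamma_t \langle \nabla f(x_t), y_t - x_t\rangle + \frac{\beta \gamma_t^2}{2}\lVert y_t - x_t\rVert^2.
\]
The crux is to control the linear term. By optimality of $y_t$ for the linearized objective, $\langle \nabla f(x_t), y_t - x_t\rangle \le \langle \nabla f(x_t), x^* - x_t\rangle$, and by convexity of $f$ the right-hand side is at most $f(x^*) - f(x_t) = -h_t$. Bounding $\lVert y_t - x_t\rVert \le R$ then yields the recursion
\[
 h_{t+1} \le (1-\gamma_t)\,h_t + \frac{\beta \gamma_t^2 R^2}{2}.
\]

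Finally I would close the argument by induction with $\gamma_t = 2/(t+1)$. The base case is handled directly: since $\gamma_1 = 1$ forces $x_2 = y_1$, a single application of the smoothness bound gives $h_2 \le \beta R^2/2 \le 2\beta R^2/3$. For the inductive step, substituting $h_t \le 2\beta R^2/(t+1)$ into the recursion and simplifying produces $h_{t+1} \le 2\beta R^2\, t/(t+1)^2$, and the elementary inequality $t(t+2) \le (t+1)^2$ upgrades this to $h_{t+1} \le 2\beta R^2/(t+2)$, completing the induction.

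I expect the main obstacle to be purely bookkeeping rather than conceptual: pinning down the step-size schedule $\gamma_t = 2/(t+1)$ and the indexing so that the recursion telescopes into exactly the $1/(t+1)$ rate with the stated constant $2\beta R^2$, and checking the base case under this schedule. The conceptual heart is the Frank--Wolfe gap bound $\langle \nabla f(x_t), x_t - y_t\rangle \ge h_t$, which follows immediately from the oracle's optimality together with convexity. In our application, $f$ is the log-sum-exp objective over the simplex $\Delta(\mathcal{A})$, which is $1$-smooth by Lemma~\ref{log-sum-exp-smooth} and has bounded diameter, so this lemma directly yields an efficient scheme to approximate the minimizer $x^*$ of Theorem~\ref{thm:zero-sum-continuous}.
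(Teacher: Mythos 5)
Your proposal is correct, and it coincides with the source the paper relies on: the paper offers no proof of this lemma, citing it verbatim as Theorem 3.8 of \cite{bubeck2015convex}, whose proof is exactly your argument --- the smoothness inequality combined with the Frank--Wolfe gap bound $\langle \nabla f(x_t), x_t - y_t\rangle \ge h_t$ (oracle optimality plus convexity) gives the recursion $h_{t+1} \le (1-\gamma_t)h_t + \tfrac{\beta\gamma_t^2 R^2}{2}$, which with $\gamma_t = 2/(t+1)$ closes by induction via $t(t+2)\le (t+1)^2$. Your bookkeeping (base case $h_2 \le \beta R^2/2 \le 2\beta R^2/3$ from $\gamma_1=1$, and the inductive step yielding $2\beta R^2 t/(t+1)^2 \le 2\beta R^2/(t+2)$) is accurate, so the proposal is sound and essentially identical to the cited proof.
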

We can therefore compute the approximate optimal strategy for the optimizer efficiently:

\begin{proposition}
    \label{cor:convex-opt}
    For zero sum continuous games where the learner is using replicator dynamics, we can find an $\epsilon$ approximate optimal strategy for the optimizer in $O(\frac{nm}{\epsilon \eta})$ time. That is we can find $x$ for which:
    \[
        R_{cont}^*(\mathbf{0}, T, A, -A) - R_{cont}(x, \mathbf{0}, T, A, -A) \leq \epsilon
    \]
\end{proposition}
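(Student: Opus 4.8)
The plan is to recognize the statement as the convex-minimization problem set up immediately above it and to instantiate the Frank--Wolfe solver of Lemma~\ref{convex-solver} on it. First I would translate the reward gap into a function-value gap. Instantiating Theorem~\ref{thm:zero-sum-continuous} at $h(0)=\mathbf 0$ gives $R^*_{cont}(\mathbf 0,T,A,-A) = \tfrac1\eta\bigl(\ln m - \min_{x\in\Delta(\mathcal A)} f(x)\bigr)$, while for a constant strategy $x$ the closed form derived in the proof of that theorem (with $\int_0^T x\,ds = Tx$) gives $R_{cont}(x,\mathbf 0,T,A,-A) = \tfrac1\eta\bigl(\ln m - f(x)\bigr)$, where $f$ is the log-sum-exp objective displayed just before the lemmas. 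Subtracting, the reward gap is exactly $R^*_{cont}-R_{cont}(x) = \tfrac1\eta\bigl(f(x)-\min f\bigr)$, so it suffices to produce an $x\in\Delta(\mathcal A)$ with $f(x)-\min f \le \eta\epsilon$.

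Next I would set up Frank--Wolfe to minimize $f$ over the simplex $\Delta(\mathcal A)$ and determine its smoothness constant and the domain diameter. Writing $f(x)=\phi(\eta h(0)-\eta T A^\top x)$ with $\phi(z)=\ln\sum_i e^{z_i}$ the log-sum-exp map, Lemma~\ref{log-sum-exp-smooth} gives that $\phi$ is $1$-smooth in $\ell_\infty$, and the chain rule yields $\nabla f(x) = -\eta T\,A\,\nabla\phi(\eta h(0)-\eta T A^\top x)$. Composing the smoothness of $\phi$ with the linear map $x\mapsto -\eta T A^\top x$ bounds the effective smoothness constant $\beta$ of $f$ on $\Delta(\mathcal A)$ by the product of $\phi$'s smoothness and the square of the operator norm of that map; paired with the diameter $R$ of the simplex in the matching norm, this controls the product $\beta R^2$ that governs the convergence rate.

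I would then invoke Lemma~\ref{convex-solver}: after $t$ iterations Frank--Wolfe returns $x_t$ with $f(x_t)-\min f \le 2\beta R^2/(t+1)$, so forcing the right-hand side below $\eta\epsilon$ needs $t = O\bigl(\beta R^2/(\eta\epsilon)\bigr)$ iterations. Each iteration costs one gradient evaluation, namely two matrix--vector products with $A^\top$ and $A$ plus a softmax, i.e.\ $O(nm)$ time, followed by the linear-minimization oracle over $\Delta(\mathcal A)$; the latter is trivial, since the minimizer of a linear functional over the simplex is a vertex $e_{i^*}$, found in $O(n)$ by selecting the smallest gradient coordinate. Multiplying the per-iteration cost by the iteration count yields the claimed $O(nm/(\epsilon\eta))$ running time.

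The main obstacle is the norm bookkeeping in the second step: one must pair the $\ell_\infty$-smoothness of $\phi$ with the $\ell_1/\ell_\infty$ geometry of the simplex and use the boundedness of the entries of $A$ (which also bounds the range of $A^\top x$ over the simplex) so that $\beta R^2$ is absorbed into a constant factor; showing this is where the $\eta T$ and $\|A\|$ dependencies must be tracked and controlled. The remaining pieces---the algebraic identity for the reward gap and the trivial simplex linear-minimization oracle---are routine.
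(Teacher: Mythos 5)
Your proposal matches the paper's proof essentially step for step: both convert the reward gap into a function-value gap for the log-sum-exp objective $g(x)=\ln\bigl(\sum_{i=1}^m e^{\eta(h_i(0)-Te_i^{\top}A^{\top}x)}\bigr)$ via Theorem~\ref{thm:zero-sum-continuous} (so that a $g$-gap of $\epsilon\eta$ suffices), and both then run Frank--Wolfe (Lemma~\ref{convex-solver}) using the smoothness of log-sum-exp (Lemma~\ref{log-sum-exp-smooth}), with $t=O(1/(\epsilon\eta))$ iterations at $O(nm)$ per iteration. The one divergence is that the point you flag as the main obstacle --- tracking the $\eta T$ and $\lVert A\rVert$ factors that the chain rule introduces into the smoothness constant of the composed function --- is exactly the point the paper's proof elides, since it simply asserts that $g$ itself is $1$-smooth in $\ell_\infty$ with simplex diameter $R=1$ and takes $\beta R^2=1$; so your write-up is, if anything, more honest about this bookkeeping than the paper's own argument.
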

\begin{proof}
   Consider the function $g(x) = \ln\left(\sum_{i=1}^m e^{\eta \left ( h_i(0)-e_i^{\top} A^{\top} x T \right)}\right)$. From \ref{log-sum-exp-smooth}, we also know that $g$ is $1-$smooth with respect to the $\ell_{\infty}$ norm. Note that since the strategies are on the simplex, we have that $R = \sup_{x,y} \lVert x- y \rVert = 1$. Using lemma  \ref{convex-solver}, with $t = \frac{2}{\epsilon \eta} - 1$ iterations, we can find $x$ for which $g(x) - g(x^*) \leq \epsilon \eta$. Each iteration takes $O(nm)$ time, giving a total complexity of $O(\frac{mn}{\epsilon \eta})$. Finally, using theorem \ref{thm:zero-sum-continuous}, we get:
   \[
        R_{cont}^*(\mathbf{0}, T, A, -A) - R_{cont}(x, \mathbf{0}, T, A, -A) = \frac{g(x)- g(x^*)}{\eta} \leq \epsilon
   \]
   as required.
\end{proof}

Next, we bound the range of the possibilities for the utility gained by the best optimizer:
\begin{proposition}
    \label{zero-sum-value}
    In the zero-sum continuous setting, the optimizer's optimal utility is bounded as follows:
    \begin{equation}
        \Val(A)\cdot T  \leq R^*_{cont}(\mathbf{0}, T, A, -A) \leq  \Val(A) \cdot T + \frac{\ln{m}}{\eta}
    \end{equation}
    where $\Val(A)$ is as defined in \ref{def:value-of-game}.
\end{proposition}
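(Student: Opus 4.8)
The plan is to start from the closed-form expression for the optimal reward supplied by Theorem~\ref{thm:zero-sum-continuous}. Substituting $h(0)=\mathbf{0}$, so that $\sum_{i=1}^m e^{\eta h_i(0)} = m$, collapses that formula to
\[
R^*_{cont}(\mathbf{0}, T, A, -A) = \max_{x \in \Delta(\mathcal{A})} \frac{\ln m - \ln\!\left(\sum_{i=1}^m e^{-\eta T\, e_i^\top A^\top x}\right)}{\eta}.
\]
I would then record the two facts that reveal a minmax quantity hiding inside this expression. First, $e_i^\top A^\top x = x^\top A e_i$ is the optimizer's payoff when the learner plays the pure action $b_i$, and since a linear function on the simplex attains its extremum at a vertex, $\min_{i\in[m]} x^\top A e_i = \min_{y \in \Delta(\mathcal{B})} x^\top A y$. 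Second, maximizing over $x$ and invoking Definition~\ref{def:value-of-game} (equivalently von Neumann's minmax theorem), $\max_{x} \min_{y} x^\top A y = \Val(A)$.

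The single analytic tool I need is the standard log-sum-exp sandwich: for all reals $z_1,\dots,z_m$,
\[
\max_{i} z_i \;\le\; \ln\!\left(\sum_{i=1}^m e^{z_i}\right) \;\le\; \max_{i} z_i + \ln m.
\]
I apply it with $z_i = -\eta T\, x^\top A e_i$, for which $\max_i z_i = -\eta T \min_i x^\top A e_i = -\eta T \min_{y} x^\top A y$.

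For the upper bound, the left inequality gives $\ln(\sum_i e^{z_i}) \ge -\eta T\min_{y} x^\top A y$, so for every fixed $x$ the bracketed term is at most $\tfrac{\ln m}{\eta} + T\min_{y} x^\top A y$; taking $\max_x$ and using $\max_x\min_y x^\top A y = \Val(A)$ yields $R^*_{cont} \le \Val(A)\,T + \tfrac{\ln m}{\eta}$. For the lower bound, the right inequality gives $\ln(\sum_i e^{z_i}) \le \ln m - \eta T\min_{y} x^\top A y$, so the bracketed term is at least $T\min_{y} x^\top A y$; evaluating at any minmax strategy $x$ (for which $\min_y x^\top A y = \Val(A)$) and using that $R^*_{cont}$ is a maximum over $x$ gives $R^*_{cont} \ge \Val(A)\,T$.

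There is no genuinely hard step here: both inequalities express that the soft-min $-\tfrac{1}{\eta T}\ln\sum_i e^{-\eta T\, x^\top A e_i}$ lies within an additive $\tfrac{\ln m}{\eta T}$ of the hard min $\min_y x^\top A y$, and $R^*_{cont} = \tfrac{\ln m}{\eta} + T\max_x(\text{soft-min})$. The only points requiring care are bookkeeping the direction of each inequality and passing $\max_x$ through correctly, so that $\max_x\min_y x^\top A y$ is recognized as $\Val(A)$ rather than, e.g., illegitimately swapping the outer maximum with the inner minimum. I would therefore present the two chains of inequalities separately and mark clearly where the identity $\max_x\min_y x^\top A y = \Val(A)$ is invoked.
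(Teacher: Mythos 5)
Your proof is correct and takes essentially the same route as the paper: both start from the closed form of Theorem~\ref{thm:zero-sum-continuous} at $h(0)=\mathbf{0}$ and sandwich the log-sum-exp between its dominant term and the dominant term plus $\ln m$, then identify $\max_{x\in\Delta(\mathcal{A})}\min_{i\in[m]} x^\top A e_i$ with $\Val(A)$. The only difference is cosmetic: you cite the standard log-sum-exp bounds directly, whereas the paper derives the same inequality inline by factoring out $e^{-\eta T\, e_j^\top A^\top x}$ for a best-response index $j$ and bounding the remaining sum between $1$ and $m$.
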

\begin{proof}
    We have:
    \begin{align*}
        R^*_{cont}(\mathbf{0}, T, A, -A) &= \frac{1}{\eta} \left(\ln\left(m\right) - \min_x \ln\left(\sum_{i=1}^m e^{-\eta \cdot e_i^{\top} A^{\top} xT}\right)\right) = \\
        & = \frac{1}{\eta} \left(\ln\left(m\right) - \min_{x \in \Delta(\mathcal{A})} \max_{j \in \{1,2, \dots, m\}} \ln\left(e^{-\eta e_j^{\top}A^{\top}xT}\sum_{i=1}^m  e^{\eta(e_j^{\top} - e_i^{\top}) A^{\top} xT}\right)\right) = \\
        & = \frac{1}{\eta}\left(\ln\left(m\right) -\min_{x \in \Delta(\mathcal{A})} \max_{j \in \{1,2, \dots, m\}} \left(-\eta e_j^{\top}A^{\top}xT + \ln\left(\sum_{i=1}^m e^{\eta(e_j^{\top} - e_i^{\top}) A^{\top} xT}\right)\right)\right) = \\
        & = \frac{1}{\eta}\left(\ln\left(m\right) + \max_{x \in \Delta(\mathcal{A})} \min_{j \in \{1,2, \dots, m\}} \left(\eta e_j^{\top}A^{\top}xT - \ln\left(\sum_{i=1}^m e^{\eta(e_j^{\top} - e_i^{\top}) A^{\top} xT}\right)\right)\right)  \\
    \end{align*}
    Note that picking out the $j$ for which $e_j^{\top}A^{\top}x \leq e_i^{\top}A^{\top}x$ for all $i\in[m]$, makes $e^{\eta(e_j^{\top} - e_i^{\top}) A^{\top} xT} \leq 1$ for all $i=1,2,\dots,m$. On the other hand $e^{\eta(e_j^{\top} - e_j^{\top}) A^{\top} xT} = 1$ . That gives $\ln{1} = 0 \leq \ln\left(\sum_{i=1}^m e^{\eta(e_j^{\top} - e_i^{\top}) A^{\top} xT}\right) \leq \ln{m}$. We finally have:
    \begin{equation}
        \max_{x \in \Delta(\mathcal{A})} \min_{j \in \{1,2, \dots, m\}} e_j^{\top}A^{\top}xT \leq R^*_{cont}(\mathbf{0}, T, A, -A) \leq \max_{x \in \Delta(\mathcal{A})} \min_{j \in \{1,2, \dots, m\}}  e_j^{\top}A^{\top}xT + \frac{\ln{m}}{\eta}
    \end{equation}
    which is exactly what we want.
\end{proof}
If we take the limit of the game as $\eta T$ goes to infinity, we can lower bound the optimal rewards of the optimizer by considering a strategy from the set $\MinMaxStrats(A)$ as defined in \ref{def:min-max-strats}, that has the least number of best responses.
\begin{proposition}
    \label{value-continuous-as-T-goes-to-infty}
    As $\eta T \to \infty$,
    \begin{align*}
        &R^*_{cont}(\mathbf{0}, T, A, -A) \\
        &\quad\geq \Val(A)\cdot T + \frac{1}{\eta} \left( \ln(m) - \ln\left(\min_{x^* \in \MinMaxStrats_{opt}(A)}|\BR(x^*)|\right) (1+o_{\eta T}(1))\right) ,
    \end{align*}
    where $o_{\eta T}(1)$ denotes a terms that decays to $0$ as $\eta T \to \infty$, and $\BR(x)$ is the set of best responses of a strategy $x$ of the optmizer as defined in \ref{def:BR-set}.
\end{proposition}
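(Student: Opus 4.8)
The plan is to start from the exact expression for $R^*_{cont}(\mathbf{0}, T, A, -A)$ established in Theorem~\ref{thm:zero-sum-continuous}, which is a maximum over $x \in \Delta(\mathcal{A})$. Because it is a maximum, substituting any particular strategy yields a valid lower bound. I would choose $x^*$ to be a minmax strategy attaining the minimum in the definition of $k := \min_{x \in \MinMaxStrats_{opt}(A)} |\BR(x)|$, so that $x^* \in \MinMaxStrats_{opt}(A)$ and $|\BR(x^*)| = k$. Writing $v_i := e_i^\top A^\top x^* = (x^*)^\top A e_i$, the defining property of a minmax strategy (Definition~\ref{def:min-max-strats}) gives $\min_i v_i = \Val(A)$, and by Definition~\ref{def:BR-set} this minimum is attained exactly on the $k$ indices in $\BR(x^*)$, while $v_i > \Val(A)$ strictly for $i \notin \BR(x^*)$. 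With $h(0)=\mathbf{0}$ the formula reduces to $R^*_{cont}(\mathbf{0}, T, A, -A) = \max_x \frac{1}{\eta}\big(\ln m - \ln(\sum_i e^{-\eta T e_i^\top A^\top x})\big)$, so it suffices to estimate the log-sum-exp term at $x^*$.

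The core computation is an asymptotic expansion of $\ln\big(\sum_{i=1}^m e^{-\eta T v_i}\big)$. I would factor out $e^{-\eta T \Val(A)}$ to obtain
\[
\sum_{i=1}^m e^{-\eta T v_i} = e^{-\eta T \Val(A)} \left(k + \sum_{i \notin \BR(x^*)} e^{-\eta T(v_i - \Val(A))}\right),
\]
where the $k$ counts the best-response terms, each equal to $1$. For $i \notin \BR(x^*)$ the gap $v_i - \Val(A)$ is strictly positive, so setting $\delta := \min_{i\notin\BR(x^*)}(v_i - \Val(A)) > 0$ bounds the residual sum by $(m-k)e^{-\eta T \delta}$, which tends to $0$ as $\eta T \to \infty$. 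Hence the parenthesized quantity equals $k\,(1 + o_{\eta T}(1))$, and taking logarithms yields $\ln\big(\sum_i e^{-\eta T v_i}\big) = -\eta T \Val(A) + \ln k + o_{\eta T}(1)$.

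Substituting this back into $R^*_{cont}(\mathbf{0}, T, A, -A) \geq \frac{1}{\eta}\big(\ln m - \ln(\sum_i e^{-\eta T v_i})\big)$ and cancelling the $-\eta T\Val(A)$ term gives $R^*_{cont} \geq T\Val(A) + \frac{1}{\eta}\big(\ln m - \ln k - o_{\eta T}(1)\big)$, which is exactly the claimed bound once the additive $o_{\eta T}(1)$ is absorbed into the multiplicative form $\ln k\,(1 + o_{\eta T}(1))$ (legitimate since $\ln k$ is a constant independent of $\eta T$). The one point requiring separate handling is the degenerate case $\BR(x^*) = [m]$, where $\delta$ is undefined: there every $v_i = \Val(A)$, the residual sum is empty, $k = m$, and the bound collapses to $R^*_{cont} \geq T\Val(A)$, which holds directly. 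I do not expect a genuine obstacle, as this is a clean asymptotic expansion; the only thing to track carefully is that the non-best-response terms decay exponentially in $\eta T$, so that their logarithmic contribution is $o_{\eta T}(1)$ rather than a surviving constant.
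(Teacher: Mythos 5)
Your proposal is correct and follows essentially the same route as the paper: substitute into the closed form of Theorem~\ref{thm:zero-sum-continuous} the minmax strategy $x^*$ with fewest best responses, split the log-sum-exp into the $k = |\BR(x^*)|$ best-response terms (each equal to $e^{-\eta T \Val(A)}$ after factoring) plus a residual of terms with strictly positive gap $v_i - \Val(A)$ that decay exponentially in $\eta T$, and conclude $\ln\bigl(\sum_i e^{-\eta T v_i}\bigr) = -\eta T \Val(A) + \ln k + o_{\eta T}(1)$, exactly as the paper does via its $d_i(j,x) = (e_j - e_i)^\top A^\top x$ decomposition with $j \in \BR(x^*)$. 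Your write-up is if anything slightly more careful (explicit gap $\delta$, explicit residual bound $(m-k)e^{-\eta T \delta}$, and the degenerate case $\BR(x^*) = [m]$), with one caveat: the absorption of the additive $o_{\eta T}(1)$ into the multiplicative form $\ln k \,(1 + o_{\eta T}(1))$ breaks down when $k = 1$ (where $\ln k = 0$), but this degenerate case is equally mishandled by the paper's own proof, which divides by $\ln(|\BR(x)|) = 0$ there.
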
   
\begin{proof}
    Fix some $x \in \Delta(\mathcal{A})$. Note that from Theorem \ref{zero-sum-value} we have:
    \[
        R^*_{cont}(\mathbf{0}, T, A, -A) \geq \frac{\ln\left(m\right)}{\eta} +  \left( e_j^{\top}A^{\top}xT - \frac{1}{\eta}\ln\left(\sum_{i=1}^m e^{\eta(e_j^{\top} - e_i^{\top}) A^{\top} xT}\right)\right)
    \]
    for any $j \in [m]$. Choose $j \in \BR(x)$ (thus it  minimizes $e_j^{\top}A^{\top}x$). Denote with $d_i(j,x) = (e_j - e_i)^{\top}A^\top x$. Note that for all $i \in [m]/\BR(x)$ we have $d_i(j,x) = 0$ and for $i \not \in \BR(x)$ we have $d_i(j,x) < 0$. We can rewrite this as:
    \begin{align*}
         &R^*_{cont}(\mathbf{0}, T, A, -A)  \\
         &\geq \frac{\ln\left(m\right)}{\eta} + \left( e_j^{\top}A^{\top}xT - \frac{1}{\eta}\ln\left(|\BR(x)| + \sum_{i  \in [m]/\BR(x)} e^{\eta T \cdot d_i(j,x)}\right)\right) \\
         & = \frac{\ln\left(m\right)}{\eta} + \\
         & +  \left( e_j^{\top}A^{\top}xT - \frac{1}{\eta}\ln\left(|\BR(x)|\right) + \frac{1}{\eta} \ln \left( 1 + \frac{\sum_{i  \in [m]/\BR(x)} e^{\eta T \cdot d_i(j,x)}}{|\BR(x)|}\right)\right) = \\
         & = \frac{\ln\left(m\right)}{\eta} + \\
         & + \left( e_j^{\top}A^{\top}xT - \frac{1}{\eta}\ln\left(|\BR(x)|\right) \left( 1 + \frac{\ln \left( 1 + \frac{\sum_{i \in [m]/ \BR(x)} e^{\eta T \cdot d_i(j,x)}}{|\BR(x)|}\right)}{\ln(|\BR(x)|)}\right)  \right)
    \end{align*}
    Note that as $\eta T \to \infty$, the term $\frac{\ln \left( 1 + \frac{\sum_{i \in [m]/\BR(x)} e^{\eta T \cdot d_i(j,x)}}{|\BR(x)|}\right)}{\ln(|\BR(x)|)} $ goes to $0$, as $d_i(j,x) < 0$. Therefore this gives :
    \[
        R^*_{cont}(\mathbf{0}, T, A, -A) \geq \frac{\ln\left(m\right)}{\eta} +  \left( e_j^{\top}A^{\top}xT - \frac{1}{\eta}\ln(|\BR(x)|)(1 + o_{\eta T}(1))\right)
    \]
    
    Choosing $x \in \MinMaxStrats_{opt}(A)$ that minimizes $|\BR(x)|$ yields the desired result.     
\end{proof}

Below we provide an example of a game with multiple min-max strategies for the optimizer, however only one of them gives optimal rewards when used against MWU.
\begin{example}
    Consider the zero sum game where:
    
    \begin{center}
        A = \begin{tabular}{ |c|c|c|c|c|c|c|c|c| } 
             \hline
              & $b_1$ & $b_2$ & $b_3$ & $\dots$ & $b_n$ & $b_{n+1}$ & $b_{n+2}$ & $b_{n+3}$ \\
            \hline
            $a_1$ & $n$ & $0$ & $0$ & \dots & $0$ & $
            n$ & $n$ & $1$ \\
            \hline 
            $a_2$ & $0$ & $n$ & $0$ & \dots & $0$ & $
            n$ & $n$ & $1$\\
            \hline 
            $a_3$ & $0$ & $0$ & $n$ & \dots & $0$ & $
            n$ & $n$ & $1$\\
            \hline 
            \vdots & \vdots & \vdots & \vdots & \dots & \vdots & \vdots & \vdots & \vdots \\
            \hline 
            $a_n$ & $0$ & $0$ & $0$ & \dots & $n$ & $
            n$ & $n$ & $1$ \\
            \hline 
            $a_{n+1}$ & $n$  & $n$  & $n$  & \dots & $n$  & $2$ & $0$ & $1$ \\
            \hline      
            $a_{n+2}$ & $n$  & $n$  & $n$  & \dots & $n$  & $0$ & $2$ & $1$\\
            \hline 
        \end{tabular}
    \end{center}
    Notice that the $\Val(A) = 1$, and there are multiple strategies that achieve this result, that might have a different number of best responses. For example $x_1 = (\frac{1}{n}, \frac{1}{n}, \dots, \frac{1}{n}, 0,0)$ has $n+1$ best responses; $\BR(x_1) = \{b_1, b_2, \dots, b_{n+3})$. One of the strategies that achieves the value of the game with the least number of best responses is $x^* = (0,0,\dots, \frac{1}{2}, \frac{1}{2}, 0)$ that has only $1$ best response, namely $\BR(x^*) = \{b_{n+3}\}$. For this game playing $x^*$ as $\eta T \to \infty$ yields $R^*_{cont}(\mathbf{0}, 0, A,-A) = T + \frac{\ln(n+3)}{\eta}$.
\end{example}

\subsection{Connection to optimal control and the Hamilton-Jacobi-Bellman equation}
 We will present another way of achieving theorem \ref{thm:zero-sum-continuous}, using literature from control theory. One can view the problem of maximizing the optimizer's utilities as an optimal control problem; what control (or strategy) should the optimizer use in order to maximize their utility given that the learner has some specific dynamics that depend on the control of the optimizer? The Hamilton-Jacobi-Bellman equation \cite{bellman1954dynamic} gives us a partial differential equation (PDE) of $R_{cont}^*(h, t, A, B)$ that if we solve, we can find a closed form solution of the optimal utility of the optimizer. The equation (for general sum games) is the following:
\[
    -\frac{dR^*_{cont}(h, t, A, B)}{dt} = \max_{x \in \Delta(\mathcal{A})} \left(\frac{\sum_{i=1}^m e^{\eta h_i} \cdot e_i^{\top}A^{\top} x}{\sum_{i=1}^m e^{\eta h_i}} + \left(\nabla_h R^*_{cont}(h, t, A, B)\right)^{\top} \cdot B^{\top}x \right)
\]
The intuition of the PDE is as follows; the current state of the learner can be defined given only the history $h$ of the sum of what the optimizer played so far, and the time left in the game $t$. Given we are at a state $h, t$, the optimal rewards for the optimizer are going to be equal to the rewards of playing action $x \in \Delta(\mathcal{A})$ for time $\Delta t$ added together with the optimal reward in the new state, namely $R_{cont}^*(h + B^{\top}x \Delta t, t + \Delta t, A, B)$. Taking the limit as $\Delta t \to 0$, gives us the above partial differential equation. 

Plugging in $B = -A$, for the case of zero-sum games, we get:
\[
    -\frac{dR^*_{cont}(h, t, A, -A)}{dt} = \max_{x \in \Delta(\mathcal{A})} \left(\frac{\sum_{i=1}^m e^{\eta h_i} \cdot e_i^{\top}A^{\top} x}{\sum_{i=1}^m e^{\eta h_i}} - \left(\nabla_h R^*_{cont}(h, t, A, -A)\right)^{\top} \cdot A^{\top}x \right)
\]
If one plugs in the formula we calculated in \ref{thm:zero-sum-continuous}, they would find that indeed it is a solution to the above PDE.

\subsection{Zero-sum Discrete games}
We can have the analogous definitions of optimal rewards for the optimizer for the discrete case. Given the sequence of actions of the optimizer given by $x : \{1,\dots,T\} \to \Delta(\mathcal{A})$, we can define:
\begin{equation}
    \label{eq:rewards-cont-appendix}
    R_{disc}(x, h(0), T, A, B) = \sum_{t=1}^T \frac{\sum_{i=1}^m e^{\eta \left(h_i(0) + e_i^{\top} B^{\top} \sum_{s=1}^tx(s)\right)} \cdot e_i^{\top}A^{\top} x(t)}{\sum_{i=1}^m e^{\eta \left(h_i(0) + e_i^{\top} B^{\top} \sum_{s=1}^tx(s)\right)}}
\end{equation}
Again, we are interested in finding the sequence of actions that maximize the rewards:
\begin{equation}
    R^*_{disc}(h(0), T, A, B) = \max_{\substack{x(t) \\ 0 \leq t \leq T} }\sum_{t=1}^T \frac{\sum_{i=1}^m e^{\eta \left(h_i(0) + e_i^{\top} B^{\top} \sum_{s=1}^tx(s)\right)} \cdot e_i^{\top}A^{\top} x(t)}{\sum_{i=1}^m e^{\eta \left(h_i(0) + e_i^{\top} B^{\top} \sum_{s=1}^tx(s)\right)}}
\end{equation}
We begin by proving that the rewards for the discrete game will always be more than the rewards for the analogous continuous game.

\begin{proposition}
    \label{thm:zero-sum-discrete}
    In the zero-sum discrete game, the optimizer can receive more utility compared to the continuous game. This can be done by playing discretely the same optimal strategy $x^*$ for the continuous game (as defined in Theorem \ref{thm:zero-sum-continuous}). Consequently,
    \begin{equation}
        R^*_{disc}(\mathbf{0}, T, A, -A) \geq  R^*_{cont}(\mathbf{0}, T, A, -A)
    \end{equation}
\end{proposition}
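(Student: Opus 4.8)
The plan is to reduce the inequality to a term-by-term comparison after rewriting the continuous-time reward as a telescoping sum. Since $x^*$ is a \emph{constant} strategy, I would set $c_i = e_i^{\top} A^{\top} x^* = x^{*\top} A e_i$ (the optimizer's payoff against the learner's pure action $b_i$) and define the partition function $\Phi_t = \sum_{i=1}^m e^{-\eta t c_i}$ for integers $t = 0,1,\dots,T$; note $\Phi_0 = m$, and that $\Phi_t$ is exactly the continuous-time normalizer $\sum_i e^{\eta h_i(t)}$ evaluated at integer times, because playing $x^*$ constantly gives $h_i(t) = -t c_i$. By the closed form of Theorem~\ref{thm:zero-sum-continuous} with $h(0)=\mathbf{0}$ and constant strategy $x^*$, the continuous-time reward equals $\frac{1}{\eta}\bigl(\ln\Phi_0 - \ln\Phi_T\bigr)$, and since this is a difference of endpoints it telescopes, so I may write $R^*_{cont}(\mathbf{0},T,A,-A) = \frac{1}{\eta}\sum_{t=1}^T \ln(\Phi_{t-1}/\Phi_t)$.

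Next I would expand the discrete-time reward obtained by playing the constant strategy $x^*$. At round $t$ the learner's mixed strategy is the MWU distribution determined by the optimizer's plays strictly before round $t$ (Definition~\ref{def:MWU-algo}), namely $q^{(t-1)}_i = e^{-\eta(t-1)c_i}/\Phi_{t-1}$, so the reward contributed in round $t$ is $\sum_{i} q^{(t-1)}_i c_i$ and $R_{disc}(x^*,\mathbf{0},T,A,-A) = \sum_{t=1}^T \sum_i q^{(t-1)}_i c_i$. The target then becomes the single per-round inequality $\sum_i q^{(t-1)}_i c_i \ge \frac{1}{\eta}\ln(\Phi_{t-1}/\Phi_t)$, which when summed against the telescoped continuous expression yields the claim.

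To establish the per-round inequality, observe that $\sum_i q^{(t-1)}_i e^{-\eta c_i} = \Phi_t/\Phi_{t-1}$, so the claim is equivalent to $-\eta\sum_i q^{(t-1)}_i c_i \le \ln\bigl(\sum_i q^{(t-1)}_i e^{-\eta c_i}\bigr)$. This is exactly Jensen's inequality for the convex map $z \mapsto e^{-\eta z}$ under the distribution $q^{(t-1)}$, which gives $\sum_i q^{(t-1)}_i e^{-\eta c_i} \ge e^{-\eta \sum_i q^{(t-1)}_i c_i}$; taking logarithms yields the claim. Summing over $t=1,\dots,T$ gives $R_{disc}(x^*,\mathbf{0},T,A,-A) \ge R^*_{cont}(\mathbf{0},T,A,-A)$, and since the discrete optimum $R^*_{disc}$ dominates the discrete reward of this particular strategy, we conclude $R^*_{disc}(\mathbf{0},T,A,-A) \ge R^*_{cont}(\mathbf{0},T,A,-A)$.

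The step I expect to be most delicate is the alignment of timing between the two models: the comparison succeeds only because the discrete learner at round $t$ responds to the history up to round $t-1$, so that its round-$t$ payoff is paired with the drop in log-potential from $t-1$ to $t$, which makes Jensen point in the favorable direction. Matching the index of Eq.~\eqref{eq:rewards-cont-appendix} to this convention, and checking the boundary terms at $t=0$ and $t=T$, is the one place where an off-by-one would flip the inequality, so I would verify it with care; the remainder is a direct telescoping-plus-Jensen computation.
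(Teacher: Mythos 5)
Your proof is correct, and while it shares the paper's overall skeleton---play the constant continuous-time-optimal $x^*$ in discrete time and compare each round's discrete reward to the continuous reward accrued over the corresponding unit interval---the key per-round step is established by a genuinely different mechanism. The paper works with the instantaneous reward $f(t)=\frac{\sum_i e^{-\eta c_i t}c_i}{\sum_i e^{-\eta c_i t}}$ (with $c_i = e_i^\top A^\top x^*$), shows $f$ is non-increasing because its derivative is minus a variance, and concludes that the left Riemann sum $\sum_{t=0}^{T-1} f(t)$ dominates $\int_0^T f(t)\,dt$. You instead telescope the log-partition function $\ln \Phi_t$ and apply a one-step Jensen inequality $\mathbb{E}_{q^{(t-1)}}\bigl[e^{-\eta c}\bigr]\ge e^{-\eta\,\mathbb{E}_{q^{(t-1)}}[c]}$; since $\frac{1}{\eta}\ln(\Phi_{t-1}/\Phi_t)$ equals exactly $\int_{t-1}^{t} f(s)\,ds$ (the continuous reward is a perfect derivative of the log-potential), your per-round inequality is the same statement as the paper's $f(t)\ge\int_t^{t+1} f(s)\,ds$, but derived without differentiation. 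Your route buys a few things: it treats the degenerate case where all $c_i$ coincide as an automatic equality, whereas the paper's strict claim $f'(t)<0$ fails there (non-strict monotonicity suffices, of course); it also sidesteps a sign slip in the paper's displayed derivative, which should read $f'(t) = -\bigl(\mathbb{E}[c^2]-\mathbb{E}[c]^2\bigr)$, i.e.\ minus a variance, rather than the printed sum of the two terms. Your closing caution about timing is also well placed: Eq.~\eqref{eq:rewards-cont-appendix} as printed uses $\sum_{s=1}^{t}x(s)$ in the learner's exponent, which would include the current round and flip the comparison, but the MWU convention of Definition~\ref{def:MWU-algo} (the learner at round $t$ responds to history strictly before $t$), which both you and the paper's actual proof adopt via the left-endpoint sum $\sum_{t=0}^{T-1}f(t)$, makes Jensen point the favorable way. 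What the paper's approach buys in exchange is the slightly stronger structural fact that the instantaneous reward is pointwise non-increasing along the trajectory, a derivative-based estimate of the same flavor as the one reused in Proposition~\ref{thm:difference-continuous-discrete}.
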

\begin{proof}
    From Theorem \ref{thm:zero-sum-continuous} we know that $R^*_{cont}(\mathbf{0}, 0, A, -A)$ is achieved by a constant strategy $x^*$ for the optimizer throughout the game. We will prove that using that same strategy $x^*$ will yield more reward for the discrete case than the continuous case, i.e.:
    \begin{align*}
        R^*_{disc}(\mathbf{0}, 0, A, -A) &\geq  \sum_{t=0}^{T-1} \frac{\sum_{i=1}^m e^{-\eta e_i^{\top} A^{\top} x^*t} \cdot  e_i^{\top}A^{\top} x^*}{\sum_{i=1}^m e^{-\eta e_i^{\top} A^{\top} x^*t}} \\
        &\geq \int_{t=0}^{T} \frac{\sum_{i=1}^m e^{-\eta e_i^{\top} A^{\top} x^*t} \cdot e_i^{\top}A^{\top} x^*}{\sum_{i=1}^m e^{-\eta e_i^{\top} A^{\top} x^*t}} dt = R^*_{cont}(\mathbf{0}, 0, A, -A)
    \end{align*}
    All we need for the above to be true is that the function $f(t) = \frac{\sum_{i=1}^m e^{-\eta e_i^{\top} A^{\top} x^*t} \cdot e_i^{\top}A^{\top} x^*}{\sum_{i=1}^m e^{-\eta e_i^{\top} A^{\top} x^*t}} = \frac{1}{\eta}\frac{\sum_{i=1}^m e^{-c_i t} \cdot c_i}{\sum_{i=1}^m e^{-c_i t}}$ is non-increasing. This is because $R_{disc}(\mathbf{0}, 0, A, -A) - R_{cont}(\mathbf{0}, 0, A, -A) = \sum_{t=0}^{T-1} f(t) - \int_0^T f(t)dt = \sum_{t=0}^{T-1} \left(f(t) - \int_t^{t+1} f(s)ds \right)$, so by proving $f$ is non-increasing we can prove $f(t) \geq \int_t^{t+1} f(s)ds$. Therefore, all we are left with is showing that $\frac{df(t)}{dt} < 0 $ for all $t \geq 0$. Notice that:
    \begin{equation}
        f'(t) = -\frac{1}{\eta}\left(\frac{\sum_{i=1}^m e^{-c_i t} \cdot c_i^2}{\sum_{i=1}^m e^{-c_i t}} + \left(\frac{\sum_{i=1}^m e^{-c_i t} \cdot c_i}{\sum_{i=1}^m e^{-c_i t}}\right)^2 \right) < 0
    \end{equation}
    which concludes the proof.
\end{proof}
We continue by proving that there are instances of discrete games, that achieve reward for the optimizer approximately $\frac{\eta T}{2}$ more than the analogous continuous game in Theorem \ref{thm:matching-pennies}. We also prove that this difference between rewards in continuous and discrete games cannot be greater than $\frac{\eta T}{2}$ in Theorem \ref{thm:difference-continuous-discrete}.
\begin{proposition}
    \label{thm:matching-pennies}
    There is a zero sum game instance for which we have that
    \[
         R^*_{disc}(\mathbf{0}, T, A, -A) \geq R^*_{cont}(\mathbf{0}, T, A, -A) + \frac{\tanh\left(\eta\right) \cdot  T}{2}
    \]
    The above is achieved by taking the optimal strategy $x^*$ for the continuous game (as defined in \ref{thm:zero-sum-continuous}) and playing it over discrete time.
\end{proposition}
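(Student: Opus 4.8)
The plan is to use the matching-pennies game $A=\begin{pmatrix} 1 & -1\\ -1 & 1\end{pmatrix}$ with $B=-A$ as the witnessing instance, and to separate two ways of realizing the continuous-optimal strategy. First I would pin down the continuous benchmark. Since $\Val(A)=0$ (Definition~\ref{def:value-of-game}), Theorem~\ref{thm:zero-sum-continuous} gives $R^*_{cont}(\mathbf 0,T,A,-A)=\max_{x}\tfrac1\eta\big(\ln 2-\ln(e^{-\eta T(A^\top x)_1}+e^{-\eta T(A^\top x)_2})\big)$. Substituting $x=(p,1-p)$ and using that $A$ is symmetric turns the inner sum into $2\cosh\!\big(\eta T(2p-1)\big)$, which is minimized at $p=\tfrac12$. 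Hence $R^*_{cont}(\mathbf 0,T,A,-A)=0$ and the continuous optimum is $x^*=(\tfrac12,\tfrac12)$, so it suffices to exhibit a discrete control sequence whose reward is at least $\tfrac{\tanh(\eta)T}{2}$.

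Next I would construct that discrete strategy. Playing the constant $x^*$ freezes the learner's weights at $(\tfrac12,\tfrac12)$ and yields exactly $0$, so instead I would split $x^*$ into its two pure components and alternate them, playing $e_1$ on even rounds and $e_2$ on odd rounds, so that the time-average remains $x^*$ (this is the concrete instance of the scheme behind Proposition~\ref{prop:gain-etaT}). The core of the argument is to track the learner's MWU state $h(t)=\sum_{s<t}B^\top x(s)$ along this alternation. Because $B^\top e_1=(-1,1)$ and $B^\top e_2=(1,-1)$ cancel in pairs, $h$ oscillates between the balanced state $(0,0)$, seen before every even round, and the tilted state $(-1,1)$, seen before every odd round.

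I would then read off the per-round reward $x(t)^\top A y(t)$. In the balanced state $y=(\tfrac12,\tfrac12)$ and playing $e_1$ gives reward $0$; in the tilted state $y\propto(e^{-\eta},e^{\eta})$ and playing $e_2$ gives $(Ay)_2=y_2-y_1=\tanh(\eta)$ — the optimizer baits the learner into overweighting $b_2$ and then matches it. Summing over the $T/2$ consecutive pairs yields total discrete reward $\tfrac{T}{2}\tanh(\eta)$, and since $R^*_{disc}$ is the maximum over all control sequences, $R^*_{disc}(\mathbf 0,T,A,-A)\ge\tfrac{T}{2}\tanh(\eta)=R^*_{cont}(\mathbf 0,T,A,-A)+\tfrac{\tanh(\eta)T}{2}$, as claimed. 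For odd $T$ one pair is incomplete, costing a single bounded $\tanh(\eta)$ term, so I would either assume $T$ even or absorb this $O(1)$ additive slack.

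I expect the main obstacle to be the bookkeeping around the learner's update convention, which controls the \emph{sign} of the gain. Equation~\eqref{eq:rewards-cont-appendix} is written with the round-$t$ weight using $\sum_{s=1}^{t}x(s)$, whereas the MWU pseudocode in Definition~\ref{def:MWU-algo} updates $h$ only after observing $x_t$, so the learner at round $t$ actually responds to $\sum_{s<t}x(s)$ — exactly the no-look-ahead convention already used in the proof of Proposition~\ref{thm:zero-sum-discrete} (its sum runs over $t=0,\dots,T-1$ with an empty first history). Under the look-ahead reading the learner sees the current move first and the alternation collects $-\tanh(\eta)$ per pair instead, so I would fix the no-look-ahead convention up front and verify that the phase of the alternation collects $+\tanh(\eta)$ precisely when the learner is tilted toward $b_2$.
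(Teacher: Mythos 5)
Your proposal is correct and follows essentially the same route as the paper's proof: the same matching-pennies instance, the same use of Theorem~\ref{thm:zero-sum-continuous} to pin down $R^*_{cont}(\mathbf{0},T,A,-A)=0$, and the same alternating pure-action schedule averaging to $x^*=(\tfrac12,\tfrac12)$, which collects $0$ on each balanced round and $\tanh(\eta)$ on each tilted round for a total of $\tfrac{T}{2}\tanh(\eta)$. Your added care --- computing the continuous benchmark explicitly, fixing the no-look-ahead MWU convention (the paper's proof indeed uses it, despite the look-ahead indexing in Eq.~\eqref{eq:rewards-cont-appendix}), noting that the literal constant $x^*$ yields $0$ so the alternation is what is really played, and handling odd $T$ --- only makes explicit what the paper leaves implicit.
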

\begin{proof}
    Consider the matching pennies problem, where the utility matrices are:
    \begin{center}
        $A$ = 
        \begin{tabular}{|c|c|c|}
            \hline
            & $b_1$ & $b_2$ \\
            \hline
            $a_1$ & $1$ & $-1$ \\
            \hline
            $a_2$ & $-1$ & $1$ \\
            \hline
        \end{tabular}, 
        $B$ = 
        \begin{tabular}{|c|c|c|}
            \hline
             & $b_1$ & $b_2$ \\
            \hline 
            $a_1$ & $-1$ & $1$ \\
            \hline
            $a_2$ & $1$ & $-1$ \\
            \hline
        \end{tabular}
    \end{center}
    Note that $\Val(A) = 0$, and $\MinMaxStrats_{opt}(A) = \{(\frac{1}{2}, \frac{1}{2})\}$, and $|\BR((\frac{1}{2}, \frac{1}{2}))| = 2$.  This gives $R^*_{cont}(\mathbf{0}, T, A, -A) = 0$. On the other hand, consider the sequence of actions of the optimizer for the discrete game to be as follows:
    \[
        x(t) =\begin{cases}
			a_2, & \text{if $t$ is odd}  \\
                a_1, & \text{otherwise}
        \end{cases}
    \]
    We claim that at round $t$ where $t$ is odd, the learner will play $(\frac{1}{2}, \frac{1}{2})$, while when $t$ is even, the learner will play $(\frac{e^{\eta}}{e^{\eta} +  e^{-\eta}}, \frac{e^{-\eta}}{e^{\eta} +  e^{-\eta}})$. Indeed, at $t = 1, 3, 5, \dots$, the optimizer will have played equal times the action $a_1$ and $a_2$, thus the rewards for actions $b_1$ and $b_2$ so far will be equal, leading the learner to play $(\frac{1}{2}, \frac{1}{2})$. When $t$ is even, action $a_2$ has been played one more time than $a_1$ is played. That means that at time $t$  for $b_1$ the historical reward is $1$ , while the historical reward for $b_2$ is $-1$. Thus the strategy for the learner for even $t$ will indeed be $(\frac{e^{\eta}}{e^{\eta} +  e^{-\eta}}, \frac{e^{-\eta}}{e^{\eta} +  e^{-\eta}})$. Therefore the optimizer's reward will be:
    \begin{align}
        R_{disc}(x(t), \mathbf{0}, T, A, -A) &= \frac{T}{2} \cdot \begin{bmatrix}
            0 & 1
        \end{bmatrix}\begin{bmatrix}
            1 & -1 \\
            -1 & 1 \\
        \end{bmatrix} \begin{bmatrix}
            \frac{1}{2} \\
            \frac{1}{2}
        \end{bmatrix} + \frac{T}{2} \cdot \begin{bmatrix}
            \frac{e^{\eta}}{e^{\eta} +  e^{-\eta}} & \frac{e^{-\eta}}{e^{\eta} +  e^{-\eta}}
        \end{bmatrix}\begin{bmatrix}
            1 & -1 \\
            -1 & 1 \\
        \end{bmatrix}  \begin{bmatrix}
            1 \\
            0
        \end{bmatrix} = \\
        & = 0 + \frac{T}{2}\left( \frac{e^{\eta}}{e^{\eta} +  e^{-\eta}} - \frac{e^{-\eta}}{e^{\eta} +  e^{-\eta}}\right) = \frac{T}{2}\cdot \tanh\left(\eta\right) \approx \frac{\eta T}{2}
    \end{align}
Combining the two gives the desired result.
\end{proof}

\begin{proposition}
    \label{thm:difference-continuous-discrete}
    In a zero-sum game where the entries of $A$ satisfy $A_{ij} \in [-1,1]$, we have that the optimal utility the optimizer can achieve in the discrete game is not more than $\eta T/2$ compared to the continuous game:
    \[
        R_{disc}^*(\mathbf{0}, T, A, -A) - R_{cont}^*(\mathbf{0}, T, A, -A) \leq \eta T/2~.
    \]
\end{proposition}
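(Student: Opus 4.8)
The plan is to analyze both rewards through the single potential function $\Phi(z)=\frac{1}{\eta}\ln\sum_{i=1}^m e^{\eta z_i}$, the log-sum-exp (softmax) potential that already appears implicitly in Theorem~\ref{thm:zero-sum-continuous}. Writing $h(t)=h(0)-A^\top S_t$ with $S_t=\sum_{s=1}^t x(s)$ for the learner's historical-reward vector in the zero-sum setting, the closed form of Theorem~\ref{thm:zero-sum-continuous} says precisely that the continuous reward of a constant strategy $\bar x$ equals $\Phi(\mathbf 0)-\Phi(-T A^\top \bar x)$, so that $R^*_{cont}(\mathbf 0,T,A,-A)=\max_{\bar x\in\Delta(\mathcal A)}\big(\Phi(\mathbf 0)-\Phi(-TA^\top\bar x)\big)$. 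The entire argument compares, round by round, the discrete per-step reward against the corresponding drop in $\Phi$, which is the quantity the continuous optimizer effectively collects.

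Concretely, for the optimal discrete sequence $x(1),\dots,x(T)$ achieving $R^*_{disc}(\mathbf 0,T,A,-A)$, let $p_t$ denote the learner's MWU distribution at round $t$, i.e.\ $p_{t,i}\propto e^{\eta h_i(t-1)}$ (weights formed from the history \emph{before} round $t$, matching the summation convention used in the proof of Proposition~\ref{thm:zero-sum-discrete}), and set $c^{(t)}=A^\top x(t)$, whose entries lie in $[-1,1]$ since $A_{ij}\in[-1,1]$ and $x(t)\in\Delta(\mathcal A)$. Then $R^*_{disc}=\sum_{t=1}^T r_t$ with $r_t=p_t^\top c^{(t)}=\mathbb{E}_{p_t}[c^{(t)}]$. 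Using $h(t)=h(t-1)-c^{(t)}$, a direct computation gives the per-round potential drop $\delta_t:=\Phi(h(t-1))-\Phi(h(t))=-\tfrac{1}{\eta}\ln\mathbb{E}_{p_t}\big[e^{-\eta c^{(t)}}\big]$, and these telescope to $\sum_{t=1}^T\delta_t=\Phi(h(0))-\Phi(h(T))$.

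Two facts then close the argument. First, since $S_T/T\in\Delta(\mathcal A)$, the averaged constant strategy $\bar x=S_T/T$ is admissible for the continuous optimizer, so $R^*_{cont}\ge\Phi(\mathbf 0)-\Phi(-A^\top S_T)=\Phi(h(0))-\Phi(h(T))=\sum_{t=1}^T\delta_t$; that is, the continuous optimum dominates the total potential drop that the discrete play induces. Second, each discrete round overshoots its potential drop by at most $\eta/2$: for a distribution $p$ on values $c\in[-1,1]$, Hoeffding's lemma gives $\ln\mathbb{E}_{p}[e^{-\eta c}]\le-\eta\,\mathbb{E}_p[c]+\eta^2/2$, whence $r_t-\delta_t=\mathbb{E}_{p_t}[c^{(t)}]+\tfrac{1}{\eta}\ln\mathbb{E}_{p_t}[e^{-\eta c^{(t)}}]\le\eta/2$. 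Combining, $R^*_{disc}-R^*_{cont}\le\sum_{t=1}^T(r_t-\delta_t)\le\eta T/2$, as claimed.

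The telescoping and the moment-generating-function manipulation are routine; the step that needs care is the per-round slack bound, where one must invoke Hoeffding's lemma (the sub-Gaussian bound $\ln\mathbb{E}[e^{\lambda(c-\mathbb{E}c)}]\le\lambda^2(b-a)^2/8$ with range $b-a=2$) to land the tight constant $\eta/2$ rather than a looser second-order estimate, since a naive Taylor expansion would only yield this asymptotically for small $\eta$. One should also double-check the indexing convention, namely that the learner's weights at round $t$ use the history before round $t$; with the opposite convention the sign of the per-round gap flips, which would be consistent with $R^*_{disc}\ge R^*_{cont}$ (Proposition~\ref{thm:zero-sum-discrete}) but would not produce the desired upper bound.
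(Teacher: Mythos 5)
Your proof is correct, and at its core it is the paper's argument in integrated form, though the repackaging is genuinely different on the surface and worth comparing. The paper lower-bounds $R^*_{cont}$ via the piecewise-constant continuous strategy that plays $x(t)$ on $[t,t+1)$ (so the histories agree at integer times); on each unit interval the continuous reward collected is exactly your potential drop $\delta_t=\Phi(h(t-1))-\Phi(h(t))$, by the same antiderivative computation as in Theorem~\ref{thm:zero-sum-continuous}, and since that theorem shows the continuous reward depends only on $\int_0^T x(s)\,ds$, the paper's mimicking strategy and your averaged constant strategy $\bar x=S_T/T$ certify the identical lower bound $R^*_{cont}\ge\sum_t\delta_t$. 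For the per-round slack, the paper writes the reward density as the tilted mean $f(s)=\mathbb{E}_{p^{(s)}}[c^{(t)}]$ with $p^{(s)}\propto p_t\,e^{-\eta s c^{(t)}}$, computes $f'(s)=-\eta\,\Var_{p^{(s)}}(c^{(t)})\ge-\eta$ for $[-1,1]$-bounded entries, and integrates to get $f(0)-\int_0^1 f(s)\,ds\le\eta/2$; your Hoeffding step $r_t-\delta_t\le\eta/2$ is precisely this derivative bound integrated, with the identical constant, since Hoeffding's lemma is itself proved by bounding the variance of the tilted law by $(b-a)^2/4=1$, matching the paper's $\mathbb{E}[u_i^2]\le1$. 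What your formulation buys is modularity: no auxiliary continuous strategy to construct, a named lemma certifying that the constant $\eta/2$ is tight (your remark that a naive Taylor expansion would not suffice is exactly right), and a telescoping that makes transparent that only the endpoints $h(0),h(T)$ matter; the paper's version is self-contained and its variance identity is reused elsewhere (cf.\ Proposition~\ref{prop:zero-sum-assumption}). Your indexing caveat is also well taken: Eq.~\eqref{eq:rewards-cont-appendix} as displayed sums the history through round $t$ inclusively, an off-by-one inconsistent with Definition~\ref{def:MWU-algo} and with the convention actually used in the proof of Proposition~\ref{thm:zero-sum-discrete}; your ``history before round $t$'' reading is the intended one, and as you note, the opposite convention would flip the sign of the per-round gap.
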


\begin{proof}
Suppose the optimal strategy for the discrete game is $x_{disc}$. We will construct a strategy $x_{cont}$ for the continuous game, that approximates the utility of the discrete game pretty well. The strategy is as follows; if $x_{disc}(t) = x_t \in \Delta(\mathcal{A})$, then $x_{cont}(s) = x_t, \forall a \in [t, t+1)$. Note that the historical rewards of the learner for $t = 0,1,2, \dots, T-1$ are the same for both the continuous and the discrete game for these specific strategies. Suppose the history at time $t$ is $h(t)$, the strategy being played at time $t$ be $x$, and $u_i = e_i^\top A^\top x$. We have:
\[
    R_{disc}^*(h(t), 1, A, -A) = \frac{\sum_{i=1}^m e^{\eta h_i(t)} u_i}{\sum_{i=1}^m e^{\eta h_i(t)}}
\]
We also have:
\begin{align}
    R_{cont}^*(h(t), 1, A, -A) & = \int_{0}^{1} \frac{\sum_{i=1}^m e^{ \eta h_i(t) - \eta u_i t } \cdot u_i}{\sum_{i=1}^m e^{\eta h_i(t) - \eta u_i t }} dt 
\end{align}
Denote by $f(t) = \frac{\sum_{i=1}^m e^{ \eta h_i(t) - \eta u_i t } \cdot u_i}{\sum_{i=1}^m e^{\eta h_i(t) - \eta u_i t }}$. 
Then,
\[
R_{disc}^*(h(t), 1, A, -A) = f(0),
\]
whereas
\begin{align}
    R_{cont}^*(h(t), 1, A, -A) & = \int_{0}^{1} f(t) dt~.
\end{align}
Note that:
\[
    f'(t) 
    = \frac{-\eta \sum_{i=1}^m e^{ \eta h_i(t) - \eta u_i t } \cdot u_i^2}{\sum_{i=1}^m e^{\eta h_i(t) - \eta u_i t }}
    + \frac{\eta\left(\sum_{i=1}^m e^{ \eta h_i(t) - \eta u_i t } \cdot u_i\right)^2}{\left( \sum_{i=1}^m e^{\eta h_i(t) - \eta u_i t }\right)^2}~.
\]
Since $u_i$'s are bounded by in $[-1,1]$, we get that $f'(t) \geq -\eta$. Hence $f(t) \ge f(0)-\eta t$, consequently,
\[
R_{cont}^*(h(t), 1, A, -A)  = \int_{0}^{1} f(t) dt \ge \int_0^1 (f(0) - \eta t) dt
= f(0) - \frac{\eta}{2}~.
\]
Therefore summing up for all $t = 0,1,\dots,T-1$:
\[
    R_{disc}^*(\mathbf{0}, T, A, -A) - R_{cont}^*(\mathbf{0}, T, A, -A) \leq \eta T/2
\]
\end{proof}

We conclude with the last proposition of this section which states that in games that satisfy the following assumption, we get that the optimizer can achieve reward at least $T \Val(A) + \Omega(\eta T)$ against a MWU learner. 
\begin{assumption} \label{a:no-pure-appendix}
There exists a minmax strategy $x$ for the optimizer such that there exists two best responses for the learner, $b_{i_1},b_{i_2} \in \BR(x)$, which do not identify on $\mathrm{support}(x)$. Namely, there exists an action $a_k \in \mathrm{support}(x)$ such that $a_k^\top A b_{i_1} \ne a_k^\top A b_{i_2}$.
\end{assumption}
\begin{proposition}
    \label{prop:zero-sum-assumption}
    For any zero-sum game $A \in \mathbb{R}^{n\times n}$ that satisfies Assumption~\ref{a:no-pure}, $\eta \in (0,1)$ and $T \in\mathbb{N}$, there exists an optimizer, such that against a learner that uses MWU with step size $\eta$, achieves a utility of at least
    \[
    T\Val(A) + C_A \eta T,
    \]
    where $C_A$ is a constant that depends only on the game matrix $A$.
\end{proposition}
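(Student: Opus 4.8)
The plan is to generalize the matching-pennies construction of Proposition~\ref{thm:matching-pennies}: take the minmax strategy $x$ guaranteed by Assumption~\ref{a:no-pure}, split it symmetrically into $x' = x + \delta w$ and $x'' = x - \delta w$, and have the optimizer alternate, playing $x''$ on odd rounds and $x'$ on even rounds. First I would pin down the splitting direction $w$. Writing $g = A(b_{i_1}-b_{i_2})$, the two best responses satisfy $x^\top g = x^\top A b_{i_1} - x^\top A b_{i_2} = \Val(A) - \Val(A) = 0$, while the assumption gives $e_k^\top g \ne 0$ for some $a_k\in\mathrm{support}(x)$; together these force $g$ to be non-constant on $\mathrm{support}(x)$ (a constant value would have to equal $x^\top g = 0$, contradicting $e_k^\top g\ne 0$), and in particular $|\mathrm{support}(x)|\ge 2$. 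Hence I can pick $w$ supported on $\mathrm{support}(x)$ with $\mathbf 1^\top w = 0$ and $w^\top g\ne 0$, so that the perturbed strategies separate $b_{i_1}$ and $b_{i_2}$. I would moreover try to choose $w$ so that the separations $\Delta_j := \delta\, w^\top A e_j$ over the best-response set $S := \BR(x)$ have zero mean, $\sum_{j\in S}\Delta_j = 0$; this is the multi-action analogue of the antisymmetry $\Delta_{i_1}=-\Delta_{i_2}$ that holds automatically for matching pennies. Finally I fix $\delta>0$ small enough (depending only on $A$) that $x',x''\in\Delta(\mathcal A)$ and that every $b_j\notin S$ still yields the optimizer at least $\Val(A)$, i.e. $x'^\top Ae_j, x''^\top Ae_j\ge\Val(A)$, using that $x^\top Ae_j>\Val(A)$ strictly for $j\notin S$.

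Next I would track the MWU learner. Since $B=-A$, its weights are $y_j(t)\propto \exp(-\eta\sum_{s<t}x(s)^\top Ae_j)$, so it favours actions with small cumulative optimizer-utility. Over each completed pair the two perturbations cancel ($x'^\top Ae_j + x''^\top Ae_j = 2\Val(A)$ for $j\in S$), so at the start of every odd round the cumulative utilities of all $j\in S$ are equal and the learner's weights are uniform on $S$; at the start of every even round they are tilted by $+\Delta_j$ in the exponent, giving $y_j\propto e^{\eta\Delta_j}$ on $S$. Because each $j\notin S$ accumulates utility faster than $S$ by a positive gap per round, its weight is exponentially suppressed; I would prove from the resulting ordering of cumulative values that the total mass $W_t:=\sum_{j\in S}y_j(t)$ satisfies $W_t\ge|S|/m$ at every round $t$.

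Then I would lower-bound the per-round utility. Using $x'^\top Ae_j,x''^\top Ae_j\ge\Val(A)$ for $j\notin S$ together with $\sum_j y_j=1$, an odd round contributes at least $\Val(A) - W_t\bar\Delta$ and an even round at least $\Val(A) + W_t\mu_\eta$, where $\bar\Delta = \frac1{|S|}\sum_{j\in S}\Delta_j$ is the uniform mean and $\mu_\eta = \big(\sum_{j\in S}\Delta_j e^{\eta\Delta_j}\big)\big/\big(\sum_{j\in S}e^{\eta\Delta_j}\big)$ is the exponentially tilted mean. The crucial inequality is $\mu_\eta - \bar\Delta = \int_0^\eta \Var_{p_s}(\Delta)\,ds \ge c_0\eta$ for all $\eta\in(0,1)$, where $p_s(j)\propto e^{s\Delta_j}$: the derivative of the tilted mean in the temperature is exactly the tilted variance, which is strictly positive because $\Delta$ is non-constant on $S$ and is bounded below by a constant $c_0=c_0(A)$ on the compact range $s\in[0,1]$. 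With the zero-mean choice $\bar\Delta=0$, each pair therefore contributes at least $2\Val(A) + W_t\mu_\eta \ge 2\Val(A) + \frac{|S|}{m}c_0\eta$, uniformly over all pairs.

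Summing over the $\lfloor T/2\rfloor$ pairs (and, if $T$ is odd, playing the minmax strategy $x$ on the final round, which yields at least $\Val(A)$) gives the claimed bound $T\Val(A)+C_A\eta T$ with $C_A = c_0|S|/(2m)$. The main obstacle is making the $\Omega(\eta)$ per-pair gain hold uniformly in $T$, including small $T$: the two sources of slack are the mass that leaks to non-best-responses and, when the geometry of the perturbation space prevents the zero-mean choice $\bar\Delta=0$, the mismatch between $W_t$ on odd versus even rounds, which a priori contributes a transient error of order $1/\eta$ that would swamp the gain for $T\lesssim 1/\eta^2$. The zero-mean choice removes this difficulty entirely and is the reason I build it into the construction; in the degenerate cases where the perturbation subspace is one-dimensional and forbids $\bar\Delta=0$, I would instead bound the transient by the total variation $\sum_t|W_{t+1}-W_t|$, which telescopes to $O(1)$ because $W_t$ increases monotonically (up to a decaying oscillation) toward $1$, recovering the $\Omega(\eta T)$ bound for every $T$.
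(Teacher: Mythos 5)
Your main line is essentially the paper's own argument: the paper also takes the minmax $x$ from Assumption~\ref{a:no-pure}, splits it as $(x'+x'')/2=x$ with $x'^\top A e_{i_1}\ne x'^\top A e_{i_2}$, alternates $x',x''$, analyzes each consecutive pair of rounds, expresses the per-pair gain over $2\Val(A)$ as $\int_0^\eta \Var(Z_r)\,dr$ for an exponentially tilted distribution (your identity ``derivative of the tilted mean equals the tilted variance'' is exactly the paper's computation), lower-bounds the variance using that at every odd $t$ the cumulative play is $(t-1)x$, so both best responses $b_{i_1},b_{i_2}$ carry weight at least $1/m$, and handles odd $T$ with one minmax round. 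Your accounting differs in one respect: you restrict attention to $S=\BR(x)$, track the mass $W_t$, and impose the zero-mean normalization $\sum_{j\in S}\Delta_j=0$; the paper avoids all of this by using $x''^\top A e_j \ge 2\Val(A)-x'^\top A e_j$ for \emph{every} action $j$ (valid since $x$ is minmax), which makes each pair's utility exactly $2\Val(A)$ plus the drop in the tilted mean of $u_j=x'^\top A e_j$ under $y(t)$, with no mean-zero condition and no mass-tracking needed.

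The genuine gap is in your degenerate branch. As you note, when $\mathrm{supp}(x)$ is small the constraints $\mathbf{1}^\top w=0$ and $w^\top A\mathbf{1}_S=0$ can force $w^\top g=0$, so $\bar\Delta=0$ is not always achievable, and your fallback does not prove the stated bound: a transient loss that ``telescopes to $O(1)$'' yields only $T\Val(A)+c\,\eta T-O(1)$, which fails to imply $T\Val(A)+C_A\eta T$ for all $T\in\mathbb{N}$, $\eta\in(0,1)$ (take $\eta T$ bounded, e.g.\ $T=2$ and $\eta$ tiny). The constants cannot rescue it either: the cross term you must absorb is $\sum_{t \text{ odd}}(W_{t+1}-W_t)\bar\Delta$ with $|\bar\Delta|=O(\delta)$, while the per-pair gain $c_0\eta$ scales like $\delta^2$ (it is a variance of the $O(\delta)$-sized $\Delta_j$'s), so the transient can dominate the gain throughout the regime $\eta T\lesssim 1/\delta$. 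The repair is simple and worth noting: replace $w$ by $-w$ if necessary so that $\bar\Delta\ge 0$; since $W_t$ is non-decreasing (for $\delta$ small relative to the gap $\min_{j\notin S}(x^\top Ae_j-\Val(A))$, every round tilts weight toward $S$), the cross term $(W_{t+1}-W_t)\bar\Delta$ is then non-negative and every pair uniformly gains $2\Val(A)+\frac{|S|}{m}c_0\eta$ — or, more cleanly, drop the $S$/non-$S$ decomposition altogether and follow the paper's inequality over all actions, where no normalization issue arises.
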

\begin{proof}
    Let $x, b_{i_1},b_{i_2},a_k$ be the strategies and actions guaranteed from Assumption~\ref{a:no-pure}.
    Since $b_{i_1}$ and $b_{i_2}$ are both best responses for $x$, $x^\top A e_{i_1} = x^\top A e_{i_2}$. By the assumption that there exists an action $a_k \in \mathrm{support}(x)$ such that $e_k^\top A e_{i_1} \ne e_k^\top A e_{i_2}$, it follows that there exists two strategies $x',x''\in \Delta(A)$ such that $(x'+x'')/2= x$, $x'^\top A e_{i_1} > x'^\top A e_{i_2}$ and $x''^\top A e_{i_1} < x''^\top A e_{i_2}$. We propose the following strategy for the learner: at any odd time $t$, play $x'$ and at any even time $t$, play $x''$. We will prove that in iterations $t$ and $t+1$ the reward of the optimizer is at least $2\Val(A) + C_A \eta$ by playing as described. Therefore, if we just sum up over all $t$, we will get the desired result.
    Now, fix some odd $t$, and we will lower bound the sum of utilities of the optimizer in iterations $t$ and $t+1$:
    \begin{align*}
    x(t)^\top A y(t) + x(t+1)^\top A y(t+1)
    = x'^\top A y(t) + x''^\top A y(t+1) 
    \end{align*}
    Denote 
    $u_i = x'^\top A e_i$ and $v_i = x''^\top A e_i$, then
    \[
    x'^\top A y(t) + x''^\top A y(t+1)
    = \sum_{i=1}^n u_i y_i(t) + \sum_{i=1}^n v_i y_{i}(t+1)~.
    \]
    Notice that
    \[
    y_i(t+1) = \frac{y_i(t) e^{-\eta x'^\top A b_i}}{\sum_{j=1}^n y_j(t) e^{-\eta x'^\top A b_j}}
    = \frac{y_i(t) e^{-\eta u_i}}{\sum_{j=1}^n y_j(t) e^{-\eta u_j}}~.
    \]
    Therefore
    \begin{equation}\label{eq:3333}
    \sum_{i=1}^n u_i y_i(t) + \sum_{i=1}^n v_i y_{i}(t+1)
    = \sum_{i=1}^n u_i y_i(t) + \sum_{i=1}^n \frac{v_i(t) y_i(t) e^{-\eta u_i}}{\sum_{j=1}^n y_j(t) e^{-\eta u_j}}~.
    \end{equation}
    Notice that
    \[
    \frac{u_i + v_i}{2}
    = \frac{(x'+x'')^\top}{2} A e_i 
    = x^\top A e_i \ge \Val(A)~.
    \]
    Hence, $v_i(t) \ge 2 \Val(A) - u_i(t)$, hence the right hand side of Eq.~\eqref{eq:3333} is at least
    \[
    \sum_{i=1}^n y_i(t) u_i + 2 \Val(A) - \sum_{i=1}^n \frac{u_i y_i(t) e^{-\eta u_i}}{\sum_{j=1}^n y_i(t) e^{-\eta u_i}}
    = 2\Val(A) + \sum_{i=1}^n y_i(t) u_i \left(1 - \frac{e^{-\eta u_i}}{\sum_{j=1}^n y_j(t) e^{-\eta u_j}} \right)
    \]
    The right hand side equals $2\Val(A)$ if $\eta=0$. We will differentiate it wrt $\eta$ to get
    \[
    \frac{d}{d\eta} \left(\frac{-\sum_{i=1}^n y_i(t) u_i e^{-\eta u_i}}{\sum_{j=1}^n y_j(t) e^{-\eta u_j}}\right)
    = \frac{\sum_{i=1}^n y_i(t)u_i^2 e^{-\eta u_i}}{\sum_{j=1}^n y_j(t) e^{-\eta u_j}}
    - \left(\frac{\sum_{i=1}^n y_i(t) u_i e^{-\eta u_i}}{\sum_{j=1}^n y_j(t) e^{-\eta u_j}}\right)^2~.
    \]
    The right hand side equals the variance of a random variable, which we denote by $Z_\eta$, such that
    \[
    \Pr[Z_\eta = u] = \sum_{i \colon u_i = u} \frac{y_i e^{-\eta y_i}}{\sum_{j=1}^n u_j e^{-\eta u_j}}~.
    \]
    We obtain that the sum of utilities for the optimzer in steps $t$ and $t+1$ is at least $2\Val(A) + \int_0^\eta \Var(Z_r) dr$. Recall that we assumed that $\eta \le 1$. Hence, if we prove that there exists some constant $C_A$ such that $\Var(Z_r) \ge C_A$ for all $r \in [0,1]$, we will obtain that the sum of utilities in iterations $t$ and $t+1$ is at least $2\Val(A) + \eta C_A$. This will imply that the sum of utilities in iterations $1$ through $T$ is at least $T \Val(A) + \eta T C_A/2$, assuming that $T$ is even (if $T$ is odd then we can play a minmax strategy in the last round and get a similar bound). It remains to bound $\Var(Z_r)$. Recall that $b_{i_1}$ and $b_{i_2}$ are two best responses to $x$, and since $\sum_{i=1}^{t-1} x(i) = \frac{t}{2} x$, it holds from the definition of MWU that $y_{i_1}(t)$ and $y_{i_2}(t)$ are among the largest entries of $y(t)$. Consequently, $y_{i_1}(t), y_{i_2}(t) \ge 1/n$. This implies that there exists some $C>0$ (that possibly depends on $A$ and $n$), such that for all $r \in (0,1)$, $\Pr[Z_r=u_{i_1}], \Pr[Z_r = u_{i_2}] \ge C$. Further, by assumption,  $u_{i_1} = x'^\top A b_{i_1} \ne x'^\top A b_{i_2} = u_{i_2}$. Consequently, there exists some $C_A>0$ such that $\Var(Z_r) \ge C_A$ for all $r \in [0,1]$. This is what we wanted to prove.
\end{proof}

\section{A computational lower bound for optimization in general-sum games}\label{sec:lb}
In this section, we provide a computational lower bound, against any optimizer that plays a repeated general-sum game with a learner, that is using the Best-Response algorithm. We frame it as a decision problem, called OCDP:

\begin{problem}[Optimal Control Discrete Pure (OCDP)]
    An OCDP instance is defined by $(A, B, n, m, k, T)$, where $n,m,k,T \in \mathbb{N}$, $A \in \{0,1\}^{n\times m}$ and $B\in [0,1]^{n \times m}$. The numbers $n$ and $m$ correspond to the actions of the optimizer and learner in a game, where $A$ is the utility matrix of the optimizer and $B$ is the utility matrix of the learner. This instance is a 'YES' instance if the optimizer can achieve utility at least $k$ after playing the game for $T$ rounds with a learner that uses the Best Response Algorithm (Eq.~\eqref{eq:maximization-br}), and 'NO' if otherwise. 
\end{problem}

We will prove that OCDP is NP-hard, using a reduction from the Hamiltonian cycle problem. 
\begin{problem}[Hamiltonian {cycle}]
    Given a directed graph  $G(V,E)$, find whether there exists a Hamiltonian cycle, i.e. a cycle that starts from any vertex, visits every vertex exactly once and closes at the same vertex where it started.
\end{problem}

It is a known theorem that the Hamiltonian Cycle is an NP-complete problem, as it is one of Karp's 21 initial NP-Complete problems.
\begin{theorem}[\cite{karp2010reducibility}]
    Hamiltonian Cycle is NP-complete.
\end{theorem}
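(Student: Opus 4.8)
The plan is to establish the two standard components of any NP-completeness proof: membership in NP, and NP-hardness via a polynomial-time reduction from a problem already known to be NP-complete.

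First I would show that Hamiltonian Cycle $\in$ NP. Given a directed graph $G(V,E)$, a candidate certificate is simply an ordering $v_{\pi(1)}, \ldots, v_{\pi(|V|)}$ of the vertices. A verifier checks in polynomial time that $\pi$ is a permutation (each vertex appears exactly once), that every consecutive pair $(v_{\pi(i)}, v_{\pi(i+1)})$ is an edge of $E$, and that the closing pair $(v_{\pi(|V|)}, v_{\pi(1)})$ is also an edge. This runs in $O(|V| + |E|)$ time, establishing membership in NP.

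Next, for NP-hardness, I would reduce from 3-SAT, whose NP-completeness is the Cook--Levin theorem. Given a 3-CNF formula $\phi$ with variables $x_1,\ldots,x_n$ and clauses $C_1,\ldots,C_m$, I would build a digraph $G_\phi$ from two kinds of gadgets. For each variable $x_i$ I introduce a horizontal \emph{chain} of $2m$ nodes with edges in both directions between neighbors; the direction in which a Hamiltonian cycle traverses this chain (left-to-right versus right-to-left) encodes setting $x_i$ to \emph{true} versus \emph{false}. The chains are stacked in series so the exit of chain $i$ feeds the entry of chain $i+1$, and the last chain loops back to the first. For each clause $C_j$ I add a single clause node $c_j$, and for each literal occurrence I attach a detour: if $x_i$ appears positively in $C_j$ I add edges letting the cycle take a side-trip to $c_j$ exactly when chain $i$ is traversed in the \emph{true} direction (using a dedicated consecutive pair of chain-$i$ nodes reserved for $C_j$), and symmetrically for a negated occurrence in the \emph{false} direction. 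I would then argue correctness in both directions: from a satisfying assignment, traverse each chain in the dictated direction and absorb each $c_j$ into one detour of a satisfied literal, yielding a Hamiltonian cycle; conversely, any Hamiltonian cycle forces a monotone pass through each chain (inducing a consistent assignment) and can reach each $c_j$ only through a detour enabled by a true literal, so visiting every clause node certifies satisfaction. Since $G_\phi$ has $O(nm)$ nodes and is constructible in polynomial time, this completes the reduction.

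I expect the main obstacle to be the gadget bookkeeping in the reverse direction: proving that an \emph{arbitrary} Hamiltonian cycle cannot cheat---entering a clause node through one variable's detour and exiting through another's, or traversing a chain non-monotonically---so that the extracted assignment is well-defined and consistent. Pinning down these cases is what turns the intuitive construction into a rigorous equivalence; the forward direction and the NP-membership check are routine by comparison.
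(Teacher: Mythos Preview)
The paper does not actually prove this theorem; it simply cites it as one of Karp's original 21 NP-complete problems and moves on, using it as a black box for the reduction in Theorem~\ref{thm:lower-bound}. So there is no proof in the paper to compare your proposal against.

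Your sketch is a perfectly standard textbook argument and would work: NP membership is trivial, and the variable-chain plus clause-node gadget reduction from 3-SAT is the usual modern presentation (essentially the one in Sipser or Papadimitriou). Note that Karp's original 1972 reduction went via Vertex Cover rather than directly from 3-SAT, so if you were aiming to reproduce the cited source you would take a slightly different route, but either chain of reductions establishes the result. Your anticipated obstacle---ruling out a Hamiltonian cycle that enters a clause node from one variable's detour and exits through another's---is the genuine subtlety, and handling it typically requires either a local case analysis on the detour gadget or a small modification (e.g., tripling the clause node into a path) that forces entry and exit through the same pair.
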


We conclude with the main result of this section, by providing a sketch followed by the full proof.

\begin{theorem}
\label{thm:lower-bound}
    OCDP is NP-hard. That is, if $\mathrm{P}\ne \mathrm{NP}$, there exists no algorithm that runs in polynomial time in $n, m$ and $k$ which distinguishes between the case that a reward of $k$ is achievable and the case that it is impossible to obtain reward more than $k-1$. 
\end{theorem}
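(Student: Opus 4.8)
The plan is to reduce the Hamiltonian Cycle problem to OCDP. Given a directed graph $G(V,E)$ with $|V|=N$ vertices, I would construct a game instance where the optimizer's pure actions correspond (roughly) to vertices of the graph, and where the Best-Response learner is engineered so that the only way for the optimizer to harvest maximal utility is to steer the learner through a sequence of states that traces out a Hamiltonian cycle. The natural parameter choices suggested by the informal theorem statement are $T = N/2 + 1$ (or a similar linear-in-$N$ horizon) and a reward threshold $k = T$, so that a YES instance corresponds to achieving the full reward $T$ and a NO instance caps the optimizer at $T-1$. The identification of optimizer-actions with vertices, together with the fact that the Best-Response learner's state $h(t)$ is a cumulative sum $h(0) + \sum_{s} B^\top x_s$ determined entirely by the multiset of actions played so far, is what lets the graph structure be encoded into the learner's dynamics.

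The key steps, in order, would be as follows. First, I would design the learner's utility matrix $B$ so that the learner's best response $y(t) = b_{\arg\max_i h_i(t)}$ acts as a ``pointer'' recording which vertex the optimizer most recently committed to; the lexicographic tie-breaking rule stated in Definition~\ref{def:BR-algo} and Eq.~\eqref{eq:maximization-br} must be used carefully to make the learner's transition deterministic and predictable. Second, I would design the optimizer's utility matrix $A \in \{0,1\}^{n \times m}$ so that the optimizer earns a unit of reward in round $t$ \emph{if and only if} the learner's current best response is $b_j$ and the optimizer plays an action $a_i$ corresponding to a vertex adjacent (in $G$) to the vertex encoded by $b_j$ --- thereby rewarding exactly the traversal of an edge of $G$. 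Third, I would add a ``visited-once'' gadget: actions or bookkeeping coordinates that make revisiting a vertex unprofitable, so that accumulating the full reward $T$ forces the optimizer to visit each vertex exactly once and then close the cycle. Fourth, I would argue the forward direction (a Hamiltonian cycle yields a control sequence achieving reward $k$) by explicitly playing the cycle, and the reverse direction (reward $\ge k$ forces a Hamiltonian cycle) by showing any sequence achieving full reward must traverse $N$ distinct edges forming a closed walk visiting all vertices.

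The main obstacle I anticipate is controlling the learner's argmax dynamics precisely enough. Because the Best-Response learner picks the single coordinate of maximal historical reward, the state transition depends on the \emph{entire} accumulated history $h(t)$, not just the last action; so the gadget must ensure that after the optimizer plays the action for a new vertex, that vertex's coordinate genuinely overtakes all others in $h(t)$, robustly across all reachable histories and respecting the lexicographic tie-break. Getting the numerical margins in $B$ right --- so that one ``push'' suffices to change the pointer and earlier pushes do not resurface later --- is the delicate quantitative heart of the reduction. A secondary difficulty is enforcing the visit-each-vertex-exactly-once constraint using only matrices with entries in $\{0,1\}$ (for $A$) and $[0,1]$ (for $B$) and within the short horizon $T = N/2+1$; I would need the edge-reward and the pointer-update to be coupled so tightly that a single round of play simultaneously advances the pointer and collects the edge reward, which is presumably why the horizon can be taken proportional to $N$ rather than to the number of edges. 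Once the gadget is verified, the NP-hardness conclusion and the stated $1-\epsilon$ inapproximability (from the gap between reward $T$ and reward $T-1$, giving $\epsilon = \Theta(1/T)$) follow immediately.
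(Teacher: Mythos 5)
Your high-level template matches the paper's proof (reduction from Hamiltonian cycle, the Best-Response learner as a ``pointer,'' unit reward for traversing an edge, an anti-revisit gadget, and the gap between reward $T$ and $T-1$), but there is a genuine gap at the heart of your construction: identifying the optimizer's actions with \emph{vertices} cannot work. The learner's state $h(t) = h(0) + \sum_{s<t} B^\top x_s$ is an order-insensitive sum of fixed rows of $B$. If each vertex $w$ is an action contributing a fixed vector, then after playing distinct vertices $w_1,\dots,w_t$ once each, $h_{w_i}$ equals a constant $\beta_{w_i}$ that does not depend on the order of play; the argmax over the visited set is therefore the vertex with the largest fixed entry, not the most recently played one. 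For the pointer to track the last vertex along \emph{every} candidate Hamiltonian ordering you would need $\beta_{w_t} > \beta_{w_i}$ for all $i<t$ simultaneously for all orderings, which is impossible. Replaying an action to ``pump'' a coordinate does not rescue this: it burns rounds in the tight horizon, and worse, if the pointer fails to advance, your adjacency-based $A$ lets the optimizer farm the same edge repeatedly and collect reward $T$ with no Hamiltonian cycle. The missing idea is exactly the paper's: take the optimizer's actions to be the \emph{edges} of $G$. An edge action $e=(v,u)$ knows both its endpoints, so a single row of $B$ simultaneously promotes the head ($B[e,u]=1$) and demotes the tail ($B[e,v]=-4$, softened to $-0.1$ for the start vertex $v_1$); along any path the contributions telescope, so the unique maximizer of $h(t)$ is provably the current endpoint of the path. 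This is precisely the ``one push suffices and earlier pushes never resurface'' property you correctly flagged as the delicate heart --- it is achievable only because each action encodes a pair of vertices, which your vertex-action scheme structurally cannot do.

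Two secondary corrections. First, the paper's visited-once gadget lives on the \emph{learner's} side, not the optimizer's: the learner's action set is doubled, with each vertex $v_j$ given a companion action $v_j'$ receiving $0.85$ whenever an edge leaving $v_j$ is played and giving the optimizer reward $0$; exiting a vertex twice pushes $h_{v_j'}$ to $1.7$, which overtakes every legitimate coordinate and forfeits the run, while the single intended revisit of $v_1$ at round $n+1$ is harmless because there is no subsequent round (and $h_{v_1}=0.9>0.85$ ensures the cycle closes correctly). Your unspecified ``bookkeeping coordinates'' would need some such mechanism, and note it consumes learner actions, not optimizer actions. Second, the parameters: with $n=|V|$ the paper sets $T=k=n+1$; the $n/2+1$ in the informal statement is measured against the learner's $2n$ actions, so your reading $T=N/2+1$ with $N=|V|$ is off by a factor of two, though immaterial to the argument. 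Your worry about confining $B$ to $[0,1]$ is handled trivially at the end of the paper's proof: an affine shift and rescaling of all entries of $B$ preserves every argmax, hence the learner's behavior.
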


\begin{proof}[Proof sketch]
Consider an instance of the Hamiltonian cycle problem: $\pi_H = (V,E)$, where $V = \{v_1,\dots,v_n\}, E=\{e_1,\dots,e_m\}$. We create an instance of OCDP as follows. First, set $T = k = n+1$. We will construct the instance such that the optimizer can receive reward $n+1$ if and only if there is a Hamiltonian cycle in the graph. Define the optimizer's actions to be $\{e_1,\dots,e_m\}$, and the learner's actions to be $\{v_1,\dots,v_n,v_1',\dots,v_n'\}$. Namely, for each node $v_i$ of the graph, the learner has two associated actions, $v_i$ and $v'_i$. The details in the sketch differ slightly from the proof for clarity. The reduction is constructed to satisfy the following properties:
\begin{itemize}
    \item The only way for the optimizer to receive maximal utility, is to play a strategy as follows: the first $n$ actions should correspond to edges $e_{i_1}-e_{i_2}-\cdots-e_{i_n}$ that form a Hamiltonian cycle which starts and ends at the vertex $v_1$; and the last action corresponds to $e_{i_{n+1}}$ which is an outgoing edge from $v_1$.
    \item We define the utility matrix for the learner such that, if the optimizer is playing according to this strategy, then the learner's actions will correspond to the vertices of the same cycle, denoted as $v_{j_1}-\cdots-v_{j_n}-v_{j_1}$. This is achieved by defining the learner's tie-breaking rule to play $v_1$ in the first round; and defining the learner's utilities such that, if the optimizer has played in rounds $1,\dots,t-1$ the edges along a path $v_{j_1}-\cdots-v_{j_t}$, then the best response for the learner at time $t$ would be to play $v_{j_t}$. To achieve this, we define for any edge $e=(v,u)$: $B[e, v] = -1$, $B[e,u] = 1$ and $B[e,w]=0$ for any other edge $w$. Consequently, after observing the edges along $v_{j_1}-\cdots-v_{j_t}$, the cumulative reward for the learner would be $1$ for $v_{j_t}$, $-1$ for $v_{j_1}$ and $0$ for any other $v_j$. Consequently, the learner will best respond with $v_{j_t}$ --- we ignore the actions $v_1',\dots,v_n'$ of the learner at the moment.
    \item In order to guarantee that the above optimizer's strategy yields a utility of $n+1$, we define the optimizer's utility such that $A[e_i,v] = 1$ if $e_i = (v,u)$ for some $u\in V$ and $A[e_i,u]=0$ otherwise. This will also force the optimizer to play edges that form a path. Indeed, if the optimizer has previously played the edges along a path $v_{j_1}-\cdots-v_{j_t}$, then, as we discussed, the learner will play $v_{i_t}$ in the next round. For the optimizer to gain a reward at the next round, they must play an edge outgoing from $v_{j_t}$. This preserves the property that the optimizer's actions form a path.
    \item Next, we would like to guarantee that the optimizer's actions align with a Hamiltonian cycle. Therefore, we need to ensure that the path played by the optimizer does not close too early. Namely, if the learner has played the edges along the path $v_{j_1}-\cdots-v_{j_t}$ and if $t<n$, then $j_1,\dots,j_t$ are all distinct. For this purpose, the actions $v_1',\dots,v_n'$ are defined. We define for any edge $e=(v_i,v_j)$: $B[e,v_i'] = 0.85$. 
    This will prevent the optimizer from playing the same vertex twice, for any round $t=1,\dots,n$ (recall that there are $T=n+1$ rounds), as argued below.
    
    Assume for the sake of contradiction that the first time  the same vertex is visited is at $t$, where $j_t=j_r$ for some $r<t\le n$. Then, the cumulative utility of action $v_{j_r}'$ for rounds $1,\dots,t$ is $1.7$, which is larger than any other action. 
    This implies that at the next round, the learner will play action $v_{j_r}'$. We define the utility for the optimizer to be zero against any action $v_j'$. Hence, any scenario where the learner plays an action $v_j'$, prevents the optimizer from receiving a utility of $n+1$. This happens whenever $i_t=i_j$ for some $j<t\le n$. Hence, an optimizer that receives a reward of $1$ at any round must play a path that does not visit the same vertex twice, in rounds $t=1,\dots,n$.
    \item Lastly, notice that we do want the optimizer to play the same vertex $v_1$ twice, at rounds $1$ and $n+1$. This does not prevent the optimizer from receiving optimal utility. Indeed, if the optimizer would play the action $v_1$ twice, then the learner would play $v_1'$ at the next round. Yet, since there are only $n+1$ rounds, there is no ``next round'' and no reward is lost. The details the force the learner to play $v_1$ in round $n+1$ appear in the full version of the proof.
\end{itemize}
We give the full proof here for completeness.
\end{proof}

\begin{proof}[Proof of theorem \ref{thm:lower-bound}]
    We'll show a polynomial time reduction from OCDP to the hamiltonian cycle problem. Let $n=|V|$ and $m=|E|$.
    Given an instance $\pi_{Ham} = G(V,E)$, where $V = \{v_1,v_2, \dots, v_n\}$ and $E = \{e_1, \dots, e_m\}$ of the Hamiltonian cycle problem, we create an instance of the OCDP problem as follows: the optimizer has $m$ actions, denoted by $\mathcal{A} = \{a_1, \dots, a_{m} \}$ and the learner has $2 n$ actions, denoted by $\mathcal{B} = \{b_1, \dots, b_{|V|}, b_{\text{in}_1}, \dots, b_{\text{in}_{|V|}} \}$. 
    The matrices $A$ and $B$, which correspond to the optimizer and learner's utility matrices respectively are defined as follows:
    \[
        A[a_i,b_j]=
        \begin{cases}
			1, & \text{if there exists $u \in V$ such that $e_i = (v_j, u)$, i.e. $e_i$ is outgoing edge of node $v_j$} \\
                0, & \text{otherwise}
        \end{cases}
    \]
    \[
        A[a_i,b_{\text{in}_j}]= 0
    \]
    \[
        B[a_i,b_j]=
        \begin{cases}
                -0.1  & \text{if $j=1$ and $ \exists u \in V$ s.t. $e_i = (v_1, u)$, i.e. $e_i$ is an outgoing edge of node $v_1$}\\
			-4, & \text{if $j\neq1$ and $\exists u \in V$ s.t. $e_i = (v_j, u)$, i.e. $e_i$ is an outgoing edge of node $v_j$}\\
                1, & \text{if $\exists u \in V$ such that $e_i = (u, v_j)$, i.e. $e_i$ is an incoming edge of node $v_j$}\\
                0, & \text{o.w.}
        \end{cases}
    \]
    \[
        B[a_i,b_{\text{in}_j}]=
        \begin{cases}
			0.85, & \text{if $\exists u \in V$ such that $e_i = (v_j, u)$, i.e. $e_i$ is an outgoing edge of node $v_j$} \\
                0, & \text{o.w.}
        \end{cases}
    \]
    Finally, we set $k = T = n + 1$, constructing the instance $\pi_{OCDP} = (A,B,m,2n,k,T)$. We conclude the reduction by proving that $\pi_{Ham}$ is a 'Yes' instance if and only if $\pi_{OCDP}$ is a 'Yes' instance.
    \begin{enumerate}
        \item $\mathbf{\pi_{Ham} \implies \pi_{OCDP}: }$ Suppose we have a Hamiltonian cycle that visits vertices $v_1 = v_{u_1} \to v_{u_2} \to \dots \to v_{u_n} \to v_{u_{n+1}} = v_{u_1} = v_1$. Suppose that edge $e_{p_i}$ connects $v_{u_i}$ to $v_{u_{i+1}}$. We will prove that the sequence of actions $a_{p_1}, a_{p_2}, \dots, a_{p_{n-1}}, a_{p_n}, a_{p_{n+1}} = a_{p_1}$ achieve reward exactly $n + 1$ for the optimizer. To prove that, it is enough to argue that if the optimizer plays as we described above, the learner will respond with $b_{u_1}, b_{u_2}, \dots, b_{u_n}, b_{u_{n+1}} = b_{u_1}$. Indeed, notice that $A[a_{p_i}, b_{u_i}] = 1, \forall i \in [n+1]$, since $e_{p_i}$ is an outgoing edge of $v_{u_i}$, therefore the optimizer will be getting reward $1$ every round. Now, to prove that the learner best responds as described, let us look at the historical rewards of the learner. Suppose the rewards for each of the $m = 2\cdot|V|$ actions of the learner for rounds $1,2,\dots, t-1$ is denoted by $h(t) = (h_1(t), \dots, h_{n}(t), h_{\text{in}_1}(t), \dots, h_{\text{in}_{n}}(t)) \in \mathbb{R}^{2n}$ . We will prove inductively that after $t$ rounds we will have $h(t) = H(t)$, where $H(t)$ is defined as:
        \[
            H_{1}(t) = \left \{
                  \begin{aligned}
                    &0, && \text{if}\ t=1 \\
                    &-0.1, && \text{if}\ 2 \leq t \leq n \\
                    &0.9, &&  \text{if}\ t=n+1 \\
                  \end{aligned} 
                  \right.
        \]
        \[
            H_{u_i}(t) = \left \{
                  \begin{aligned}
                    &0, && \text{if}\ t < i \\
                    &1, && \text{if}\  t = i \\
                    &-3, &&  \text{if}\ t > i \\
                  \end{aligned} 
                  \right., i > 1
        \]
        \[
            H_{\text{in}_{u_i}}(t) = \left \{
                  \begin{aligned}
                    &0, && \text{if}\ t \leq i \\
                    &0.85, && \text{if}\ t > i \\
                  \end{aligned} 
              \right., i=1,2,\dots, n
        \]
        Notice that the cases $t = 1,2$  are trivial; at $t=1$ everything is equal to 0 since the learner has accumulated no reward yet. At time $t = 2$ we only update $h_{\text{in}_1} \to 0.85, h_{1} \to -0.1, h_{u_2} \to 1$, thus giving us $h(2) = H(2)$. Suppose that after some $t$ rounds, where  $2 \leq t \leq n$ rounds we have $h(t) = H(t)$.
        After $a_{p_t}$ is played by the optimizer, the learner has to update $h_{{\text{in}_{u_t}}} \to 0.85, h_{u_t} \to 1-4=-3$ and if $t < n$ we update $h_{u_{t+1}} \to 1$ otherwise if $t = n$ we update $h_{1} \to -0.1 + 1 = 0.9$. Either way, we get $h(t+1) = H(t+1)$. To complete the proof, note that for each $t$, given that the of the learner history is $H(t)$, best response for the learner is $b_{u_t}$, thus the sequence of actions $a_{p_1}, a_{p_2}, \dots , a_{p_n}, a_{p_{n+1}}$ achieves reward $n+1$ for the optimizer.

        \item $\mathbf{\pi_{OCDP} \implies \pi_{Ham}: }$ Suppose there is a sequence of actions played by the optimizer that receive reward $n+1$ in $n+1$ rounds, with actions $a_{p_1}, a_{p_2}, \dots, a_{p_{n+1}}$. We will prove that $a_{p_1}, \dots a_{p_n}$ correspond to the edges of a Hamiltonian Cycle starting from node $v_1 = 1$, i.e. edges $e_{p_1}, \dots, e_{p_n}$ make a Hamiltonian Cycle, by connecting nodes $v_1 = v_{u_1}, v_{u_2}, \dots, v_{u_n}$. We again denote the rewards of the learner for rounds $1,2,\dots, t-1$ with $h(t) = (h_1(t), \dots, h_{n}(t), h_{\text{in}_1}(t), \dots, h_{\text{in}_{n}}(t)) \in \mathbb{R}^{2n}$. 
        \begin{itemize}
            \item \textbf{Action at $t=1$:} Start by observing that the learner will play $b_1$ at $t=1$ as it is the tie breaking rule, so the optimizer needs to play an action that achieves reward $1$ against that. Note that the way we constructed the utility matrix of the optimizer, the only actions that achieve reward against $v_i$ correspond to outgoing edges from vertex $v_i$. So at $t=1$, the optimizer is forced to play an action that corresponds to an outgoing edge of $v_1 = 1$. Suppose the first action of the optimizer is $a_{p_1}$, corresponding to the edge $e_{p_1} = (v_{1}, v_{u_2})$.
            \item \textbf{Action at $t=2$:}  The rewards of the optimizer will be updated as follows, after the optimizer plays $a_{p_1}$ and the learner plays $b_{1}$:
            \[
                H_{i}(t) = \left \{
                      \begin{aligned}
                        &-0.1, && \text{if } i=1 \\
                        &1, && \text{if } i = u_2 \\
                        &0, &&  \text{o.w.} \\
                      \end{aligned} 
                      \right.
            \]
            \[
                H_{\text{in}_{i}}(t) = \left \{
                      \begin{aligned}
                        &0.85, && \text{if } i=1 \\
                        &0, && \text{o.w.} \\
                      \end{aligned} 
                  \right.
            \]
            We note that now new best response of the learner will be $b_{u_2}$. Thus, the only way for the optimizer to obtain reward $1$ for the second round is to play an action corresponding to an outgoing edge of node $v_{u_2}$. Suppose the action the optimizer plays is $b_{p_2}$, corresponding to an edge $e_{p_2} = (v_{u_2}, v_{u_3})$.
        \item \textbf{Actions at $2<t<n$:} We will prove inductively that the first $t \leq n-1$ actions $a_{p_1}, \dots, a_{p_t}$ correspond to edges of the form $e_{p_1} = (v_{u_1}, v_{u_2}), e_{p_2} = (v_{u_2}, v_{u_3}), \dots, e_{p_t} = (v_{u_t}, v_{u_{t+1}})$ where $u_i \neq u_j$ for $1 \leq i < j \leq n$, i.e. edges that define a simple path in the graph, starting from node $v_1$. By playing actions in such a manner, the historical rewards of the learner will be of the form:
        \[
            H_{1}(t) = \left \{
                  \begin{aligned}
                    &0, && \text{if } t=1 \\
                    &-0.1, && \text{if } 2 \leq t \leq n \\
                  \end{aligned} 
                  \right.
        \]
        \[
            H_{u_i}(t) = \left \{
                  \begin{aligned}
                    &0, && \text{if } t < i \\
                    &1, && \text{if }  t = i \\
                    &-3, &&  \text{if } t > i \\
                  \end{aligned} 
                  \right., i > 1
        \]
        \[
            H_{\text{in}_{u_i}}(t) = \left \{
                  \begin{aligned}
                    &0, && \text{if } t \leq i \\
                    &0.85, && \text{if } t > i \\
                  \end{aligned} 
              \right., i=1,2,\dots, n
        \]
        We proved that the first $t$ actions correspond to edges of a simple path for $t \leq 2$. We assume for the inductive step that it is true for the first $t < n-1$ actions $a_{p_1}, \dots, a_{p_t}$ of the optimizer. We will prove that the next action $a_{p_{t+1}}$ by the optimizer corresponds to an edge $e_{p_{t+1}}$ that extends this simple path. Since the first $t$ actions are corresponding to a simple path, at round $t+1$ the learner's historical rewards are going to be equal to $H(t+1)$. Note that the best response for the learner is going to be $b_{u_{t+1}}$. In order for the optimizer to gain reward $1$ the $t+1$'th round, they have to be playing an action $a_{p_{t+1}}$ for which $e_{p_{t+1}}$ is an outgoing edge of the node $v_{u_{t+1}}$. Suppose the optimizer chooses to play $a_{p_{t+1}}$, where $e_{p_{t+1}} = (v_{u_{t+1}}, v_{u_{t+2}})$, and $u_{t+2} = u_l$ for some $1 \leq l \leq t+1$. At round $t+2$ the optimizer will still have to play an action that corresponds to an outgoing edge of $v_{u_l}$. However, after that, we will have $h_{\text{in}_{u_l}}(t+3) = 1.70$, which will make $b_{u_l}$ the best response. However, there is no action the optimizer can play against $b_{u_l}$ in which they receive reward $1$, making it impossible for the optimizer to obtain $n+1$ reward in $n+1$ rounds. Thus the optimizer needs to  play $e_{p_{t+1}} = (v_{u_{t+1}}, v_{u_{t+2}})$, and $u_{t+2} \neq u_l, l \in [t+1]$, completing the inductive step. Since we also proved that the first $2$ actions are forming a simple path starting from $v_1$, using the inductive step, we conclude that the first $n-1$ actions have to correspond to a simple path in the graph starting from $v_1$.
        
        \item \textbf{Actions at $t=n, n+1$:} Suppose the last two actions of the optimizer are $a_{p_n}, a_{p_{n+1}}$. We want to prove that $e_{p_n} = (v_{u_n}, v_1), e_{p_{n+1}} = (v_1, u)$, where $u$ is some node connected to $v_1$. We know that the history of the learner at time $t = n$, as proven in the previous bullet point, is as follows:
        \[
            h_{u_i}(n) = \left \{
                  \begin{aligned}
                    &-0.1, && \text{if}\  i = 1 \\
                    &1, && \text{if}\  i = n \\
                    &-3, &&  \text{o.w.} \\
                  \end{aligned} 
                  \right., i > 1
        \]
        \[
            h_{\text{in}_{u_i}}(n) = \left \{
                  \begin{aligned}
                    &0, && \text{if}\ i = n \\
                    &0.85, && \text{o.w.} \\
                  \end{aligned} 
              \right., i=1,2,\dots, n
        \] Note that the best response for the learner at time $t = n$ will be $b_{u_n}$. Thus the optimizer will have to play an action $a_{p_n}$ corresponding to an outgoing edge of $v_{u_n}$. Suppose $e_{p_n} = (v_{u_n}, x)$. Assume $x \neq 1$. Then the history is updated to $h_{u_n} \to -3, h_x \to -2, h_{\text{in}_{u_n}} \to 0.85$. Note that the best response is any action $b_{u_{\text{in}_i}}$ for which the optimizer cannot get reward $1$ from. Thus if $x \neq 1$ it is impossible to get reward $1$ every round. On the other hand, if $x = 1$, then the best response at time $n+1$ will be $b_1$ since the $h_1(n+1) = -0.1+1 = 0.9 > 0.85$, and thus playing any action corresponding to an outgoing edge of $v_1$ will obtain reward $1$ for the learner at that round. 
        \end{itemize} 
    \end{enumerate}
    Lastly, notice that the problem statement requests the entries of the weight matrix $B$ to be in $[0,1]$. While our construction have weights in $[-4,4]$, scaling and shifting all the entries by the same amount does not affect the identity of the best response.
 \end{proof}   

\begin{figure}[!ht]
    \centering
    \includegraphics[width=0.6\textwidth]{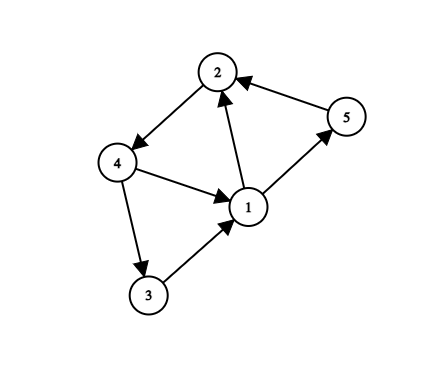}
    \caption{Graph $G$}
    \label{fig:Graph}
\end{figure}

We will show an example of how we can make the reduction from Hamiltonian Cycle to OCDP. Imagine we have an instance of Hamiltonian Cycle that corresponds to the graph shown in \ref{fig:Graph}, and we want to reduce to OCDP. Since we have $|V|=5$ vertices and $|E|=7$ edges, we instantiate the following instance of OCDP: $(A, B, n = |E|, m = 2|V|, k = |V|+1, T = |V| + 1)$, where the actions sets are $\mathcal{A} = \{e_1 = (1,5), e_2 = (5,2), e_3 = (1,2), e_4 = (2,4), e_5 = (4,1), e_6 = (4,3), e_7 = (3,1) \}$ $\mathcal{B} = \{v_1, v_2, v_3, v_4, v_5, v_{in_1}, v_{in_2},v_{in_3}, v_{in_4}, v_{in_5} \}$, and the utility matrices $A$ and $B$ as shown in the tables below. 
\begin{center}
    A = \begin{tabular}{ |c|c|c|c|c|c|c|c|c|c|c| } 
         \hline
          & $v_1$ & $v_2$ & $v_3$ & $v_4$ & $v_5$ & $v_{in_1}$ & $v_{in_2}$ & $v_{in_3}$ & $v_{in_4}$ & $v_{in_5}$ \\
        \hline
        $e_1$ & 1 & 0 & 0 & 0 & 0 & 0 & 0 & 0 & 0 & 0 \\
        \hline 
        $e_2$ & 0 & 0 & 0 & 0 & 1 & 0 & 0 & 0 & 0 & 0 \\
        \hline
        $e_3$ & 1 & 0 & 0 & 0 & 0 & 0 & 0 & 0 & 0 & 0 \\
        \hline
        $e_4$ & 0 & 1 & 0 & 0 & 0 & 0 & 0 & 0 & 0 & 0 \\
        \hline
        $e_5$ & 0 & 0 & 0 & 1 & 0 & 0 & 0 & 0 & 0 & 0 \\
        \hline
        $e_6$ & 0 & 0 & 0 & 1 & 0 & 0 & 0 & 0 & 0 & 0 \\
        \hline
        $e_7$ & 0 & 0 & 1 & 0 & 0 & 0 & 0 & 0 & 0 & 0 \\
        \hline
    \end{tabular}
\end{center}
\begin{center}
    B = \begin{tabular}{ |c|c|c|c|c|c|c|c|c|c|c| } 
         \hline
          & $v_1$ & $v_2$ & $v_3$ & $v_4$ & $v_5$ & $v_{in_1}$ & $v_{in_2}$ & $v_{in_3}$ & $v_{in_4}$ & $v_{in_5}$ \\
        \hline
        $e_1$ & -1 & 0 & 0 & 0 & 1 & 0.85 & 0 & 0 & 0 & 0 \\
        \hline 
        $e_2$ & 0 & 1 & 0 & 0 & -4 & 0 & 0 & 0 & 0 & 0.85 \\
        \hline
        $e_3$ & -1 & 1 & 0 & 0 & 0 & 0.85 & 0 & 0 & 0 & 0 \\
        \hline
        $e_4$ & 0 & -4 & 0 & 1 & 0 & 0 & 0.85 & 0 & 0 & 0 \\
        \hline
        $e_5$ & 1 & 0 & 0 & -4 & 0 & 0 & 0 & 0 & 0.85 & 0 \\
        \hline
        $e_6$ & 0 & 0 & 1 & -4 & 0 & 0 & 0 & 0 & 0.85 & 0 \\
        \hline
        $e_7$ & 1 & 0 & -4 & 0 & 0 & 0 & 0 & 0.85 & 0 & 0 \\
        \hline
    \end{tabular}
\end{center}
Notice that the only way for the optimizer to achieve reward $|V| + 6 = 5 + 1 = 6$ is if he plays the following actions:
\[
    e_1, e_2, e_4, e_6, e_7, e_1
\]
and the learner's actions are the following:
\[
    v_1, v_5, v_2, v_4, v_3, v_1
\]
The following table shows the rewards history of the learner during this game:
\begin{center}
    $r(t)$ = \begin{tabular}{ |c|c|c|c|c|c|c|c|c|c|c| } 
         \hline
          & $v_1$ & $v_2$ & $v_3$ & $v_4$ & $v_5$ & $v_{in_1}$ & $v_{in_2}$ & $v_{in_3}$ & $v_{in_4}$ & $v_{in_5}$ \\
        \hline
        $t=0$ & 0 & 0 & 0 & 0 & 0 & 0 & 0 & 0 & 0 & 0 \\
        \hline
        $t=1$ & -0.1 & 0 & 0 & 0 & 1 & 0.85 & 0 & 0 & 0 & 0 \\
        \hline
        $t=2$ & -0.1 & 1 & 0 & 0 & -3 & 0.85 & 0 & 0 & 0 & 0.85 \\
        \hline
        $t=3$ & -0.1 & -3 & 0 & 1 & -3 & 0.85 & 0.85 & 0 & 0 & 0.85 \\
        \hline
        $t=4$ & -0.1 & -3 & 1 & -3 & -3 & 0.85 & 0.85 & 0 & 0.85 & 0.85 \\
        \hline
        $t=5$ & -0.1 & -3 & 1 & -3 & -3 & 0.85 & 0.85 & 0 & 0.85 & 0.85 \\
        \hline
        $t=5$ & 0.9 & -3 & -3 & -3 & -3 & 0.85 & 0.85 & 0.85 & 0.85 & 0.85 \\
        \hline
        $t=6$ & 0.8 & -3 & -3 & -3 & -3 & 1.70 & 0.85 & 0.85 & 0.85 & 0.85 \\
        \hline
    \end{tabular}
\end{center}

\section{Conclusion and future directions}
\label{sec:future-work}
In this paper we studied how an optimizer should play in a two-player repeated game knowing that the other agent is a learner that is using a known mean-based algorithm. In zero-sum games we showed how they can gain optimal utility against the Replicator Dynamics and we further analyzed the utility that they could gain against MWU.
In general sum games, we showed the first computational hardness result on optimizing against a mean-based learner, by reduction from Hamiltonian Cycle.

One interesting problem that remains is the open is analyzing the optimal reward in general sum games against the Replicator Dynamics (or the MWU), which was denoted as $R_{cont}^*(\mathbf{0}, T, A, B)$. In the fully general case with no restriction on the  utility matrices $A$ and $B$, we believe there is no closed for solution for the optimal utility, differently from the zero-sum case. However, it would be interesting to understand how $R_{cont}^*(\mathbf{0}, T, A, B)$ behaves as a function of $A$ and $B$ and how the best strategy for the optimizer looks like. A conjecture in this direction was given by \citet{deng2019strategizing}. Perhaps it would be easier to study simpler scenarios, such as the one where $\mathrm{rank}(A+B)=1$, which has been explored in the context of computing equilibria and the convergence of learning algorithms to them (\cite{adsul2021fast}, \cite{anagnostides2022optimistic}).
Another direction is improving on the lower bound for general sum games. Currently, we prove that it is hard to distinguish between the case where the optimizer can achieve reward $\alpha=T$ and the case where the optimizer cannot achieve more than $\beta=T-1$. Is it also hard to distinguish between a reward of at least $\alpha$ or at most $\beta$ in cases where $\alpha-\beta = \Omega(T)$? Are there lower bounds when the learner uses different learning algorithms, such as MWU? Other relevant open directions are extensions to multi-agent settings \cite{cai2023selling}, analyzing how the learner's utility is impacted by interaction with the optimizer in general-sum games \cite{guruganesh2024contracting}, which learning algorithms yield higher utilities against an optimizer, and what algorithms should be used to both learn and optimize at the same time?
\clearpage

\end{document}